\PassOptionsToPackage{unicode}{hyperref}
\PassOptionsToPackage{hyphens}{url}
\documentclass[
]{article}
\usepackage{amsmath,amssymb}
\usepackage{iftex}
\ifPDFTeX
  \usepackage[T1]{fontenc}
  \usepackage[utf8]{inputenc}
  \usepackage{textcomp} % provide euro and other symbols
\else % if luatex or xetex
  \usepackage{unicode-math} % this also loads fontspec
  \defaultfontfeatures{Scale=MatchLowercase}
  \defaultfontfeatures[\rmfamily]{Ligatures=TeX,Scale=1}
\fi
\usepackage{lmodern}
\ifPDFTeX\else
\fi
\IfFileExists{upquote.sty}{\usepackage{upquote}}{}
\IfFileExists{microtype.sty}{% use microtype if available
  \usepackage[]{microtype}
  \UseMicrotypeSet[protrusion]{basicmath} % disable protrusion for tt fonts
}{}
\makeatletter
\@ifundefined{KOMAClassName}{% if non-KOMA class
  \IfFileExists{parskip.sty}{%
    \usepackage{parskip}
  }{% else
    \setlength{\parindent}{0pt}
    \setlength{\parskip}{6pt plus 2pt minus 1pt}}
}{% if KOMA class
  \KOMAoptions{parskip=half}}
\makeatother
\usepackage{xcolor}
\usepackage[margin=1in]{geometry}
\usepackage{longtable,booktabs,array}
\usepackage{calc} % for calculating minipage widths
\usepackage{etoolbox}
\makeatletter
\patchcmd\longtable{\par}{\if@noskipsec\mbox{}\fi\par}{}{}
\makeatother
\IfFileExists{footnotehyper.sty}{\usepackage{footnotehyper}}{\usepackage{footnote}}
\makesavenoteenv{longtable}
\usepackage{graphicx}
\makeatletter
\def\maxwidth{\ifdim\Gin@nat@width>\linewidth\linewidth\else\Gin@nat@width\fi}
\def\maxheight{\ifdim\Gin@nat@height>\textheight\textheight\else\Gin@nat@height\fi}
\makeatother
\setkeys{Gin}{width=\maxwidth,height=\maxheight,keepaspectratio}
\makeatletter
\def\fps@figure{htbp}
\makeatother
\usepackage{amsmath}
\usepackage{amssymb}
\usepackage{amsthm}
\usepackage{newunicodechar}
\newunicodechar{ρ}{\ensuremath{\rho}}
\newunicodechar{τ}{\ensuremath{\tau}}
\newunicodechar{θ}{\ensuremath{\theta}}
\newtheorem{definition}{Definition}
\newtheorem{proposition}{Proposition}
\newtheorem{corollary}{Corollary}
\newtheorem{remark}{Remark}
\newtheorem{example}{Example}

\usepackage{booktabs}
\usepackage{longtable}
\usepackage{array}
\usepackage{multirow}
\usepackage{wrapfig}
\usepackage{float}
\usepackage{colortbl}
\usepackage{pdflscape}
\usepackage{tabu}
\usepackage{threeparttable}
\usepackage{threeparttablex}
\usepackage[normalem]{ulem}
\usepackage{makecell}
\usepackage{xcolor}
\ifLuaTeX
  \usepackage{selnolig}  % disable illegal ligatures
\fi
\usepackage[]{natbib}
\usepackage{bookmark}
\IfFileExists{xurl.sty}{\usepackage{xurl}}{} % add URL line breaks if available
\hypersetup{
  pdftitle={Covariate-adjusted statistical dependence representation through partial copulas: bounds and new insights},
  pdfauthor={Vinícius Litvinoff Justus (Carlos III University of Madrid, Spain); Felipe Fontana Vieira (Ghent University, Belgium)},
  pdfkeywords={Copula theory; Rosenblatt transformation; partial
correlation; causal inference},
  hidelinks,
  pdfcreator={LaTeX via pandoc}}

\title{Covariate-adjusted statistical dependence representation through
partial copulas: bounds and new insights}
\author{Vinícius Litvinoff Justus (Carlos III University of Madrid,
Spain)\footnote{\href{mailto:viniliff@gmail.com}{\nolinkurl{viniliff@gmail.com}};
  \href{mailto:vlitvino@est-econ.uc3m.es}{\nolinkurl{vlitvino@est-econ.uc3m.es}}} \and Felipe
Fontana Vieira (Ghent University, Belgium)}
\date{May 2026}

\begin{document}
\maketitle
\begin{abstract}
In this paper, we revisit the notion of partial copula, originally
introduced to test conditional independence, highlighting its capability
to represent the dependence between two random variables after removing
their dependence with a covariate. Building upon results previously
presented in the literature, we show that partial copulas can be seen as
a nonlinear analogue of partial correlation. Then, we prove several
results showing how dependence properties of the conditional copulas
constrain the form of the partial copula. Finally, a simulation study is
conducted to illustrate the results and to show the potential of the
partial copula as a way to describe covariate-adjusted statistical
dependence. This highlights the potential of the method to be used in
causal inference problems and to recover the true sign of a causal
effect.
\end{abstract}

\section{Introduction}\label{introduction}

Quantifying the strength of the statistical association between two
random variables, controlled by one or more variables (usually called
covariates or control variables), is an old problem in statistics and
widely discussed in the context of linear models and regression
coefficients \citep{kutner2005, baba2004, kim2015}. It is well
established in the statistical literature that, under wide conditions,
conditionally independent random variables are unconditionally
correlated \citep[see theorem 2 in][]{oneill2009}. In fact, latent
variables ensuring conditional independence are a common tool in
statistics to ``create'' or ``explain'' dependence between variables -
see, for instance, factor models \citep{joe2014, Bork2017, Nguyen2020}
and de Finetti's representation theorem \citep{oneill2009}. Therefore,
in the context of causal models, two variables causally unrelated may be
correlated if they are both affected by a third variable, called
confounder \citep{pearl2009, pearl2016}. For this reason, the analysis
of conditional and partial associations is of great importance in
statistical analysis of observational data; in the context of regression
analysis, this is done by adding control variables \citep{kutner2005},
and, in the context of correlation analysis, this is done through
partial correlation \citep{baba2004, kim2015}. A brief introduction on
how to choose control variables properly (from a causal viewpoint) is
presented in \citet{cinelli2022}.

In the same way that using a single numerical index (such as the
expectation) is not as informative as knowing the distribution of a
variable, it is also true that correlation coefficients fail to capture
several aspects of the dependence structure of a random vector - for
this reason, copula theory offers a much richer way to study statistical
dependence. Copulas are functions such that, given the copula and the
univariate (marginal) distributions of a random vector, the joint
distribution of the variables is fully characterized - therefore, they
fully characterize the dependence structure behind a random vector
\citep{nelsen2006, joe2014}. In fact, it has been proved that copulas
offer sufficient and necessary conditions for independence
\citep[Theorem 2.4.2 in][]{nelsen2006} and conditional independence
\citep[Theorem 1 in][]{gonzalezlopez2024}. \citet{embrechts2002}
presents an example of two bivariate distributions that share the same
marginal distributions and same Pearson's correlation value, but whose
tail dependence is not the same because the copulas are not the same.
This exemplifies the limitation of using Pearson's correlation
coefficient to summarize the dependence structure behind a random
vector. Therefore, having the notion of partial copula is of great
relevance to fully study the stochastic dependence between two variables
after controlling for some confounder, because they are a natural
extension of the notion of partial correlation.

Partial copulas were introduced in \citet{bergsma2004} as a tool to
construct a conditional independence test \citetext{\citealp[which has
also been explored
in][]{bergsma2011}; \citealp{patra2016}; \citealp[and][]{petersen2021}},
and their use in vine copula and stochastic process modeling has been
discussed in the literature \citep{bladt2022, bauer2016, spanhel2019}.
However, to the best of our knowledge, the use of this notion as a way
to formally study covariate-adjusted dependence has not been explored
enough in the literature \citep[despite the existence of some references
exploring related matters, see][ and \citet{spanhel2016}]{gijbels2015}.
In the context of linear models, the use of copulas to enable causal
inference was proposed in \citet{park2012} and further studied in
\citet{falkenstrom2023}; the method proposed by the authors is a
data-driven method that attempts to remove the confounder effect without
measuring the confounding variable directly (which may require strong
untestable assumptions). More recently, the use of vine copulas in
causal inference was considered by \citet{justus2024} and
\citet{czado2025} - however, to the best of our knowledge, the use of
partial copulas to investigate causal effects is not yet properly
explored in the literature.

Our main contributions are twofold. First, we give a practical
interpretation of the notion of partial copula as a nonlinear analogue
of partial correlation, suggesting that its use should not be limited to
conditional independence testing. Second, we prove several results
showing how properties of the conditional copulas - such as Kolmogorov
distance dependence, quadrant positive dependence and Spearman's
correlation - constrain the form of the partial copula. Additionally,
the results are illustrated through simulations.

The paper is organized as follows. Next section provides all the
necessary background to the study of partial copulas: Rosenblatt
transformations, probability integral transform and copula functions are
introduced. Section 3 is divided in three subsections, the first two
discussing the interpretation of partial copula and the last one
introducing new results connecting conditional and partial copulas. A
simulation study is conducted in Section 4, and final considerations are
given in Section 5.

\section{Background}\label{background}

\subsection{Copulas}\label{copulas}

Copulas are multivariate functions \([0,1]^d \to [0,1]\) used to
describe the dependence structure of random vectors of dimension
\(d \in \mathbb{N}\). The definition in the bivariate case is presented
below.

\begin{definition}\label{def:2copula}
A 2-copula is a function $C: [0,1]^2 \to [0,1]$ with the following properties:

(i) $C(u,0) = C(0,v) = 0$, and $C(u,1) = u$, $C(1,v) = v$, $\forall u,v: u,v \in [0,1]$;

(ii) for every $u_i, v_i \in [0,1]$, $i = 1,2$, such that $u_1 \leq u_2$ and $v_1 \leq v_2$,
$$C(u_2, v_2) - C(u_2, v_1) - C(u_1, v_2) + C(u_1, v_1) \geq 0.$$
\end{definition}

Note that Definition \ref{def:2copula} is equivalent to the definition
of 2-variate cumulative distribution function (CDF) whose marginal
distributions are uniform in \([0,1]\) range. Therefore, given any two
random variables \(U, V\) such that \(U \sim U(0,1)\) and
\(V \sim U(0,1)\), the cumulative distribution function
\(P(U \leq u, V \leq v)\) is necessarily a copula (because it satisfies
Definition \ref{def:2copula}).

Sklar's theorem \citep{sklar1959, nelsen2006} guarantees that, given any
two random variables \(X\) and \(Y\), it holds that
\[P(X \leq x, Y \leq y) = C(F_X(x), F_Y(y)),\] where \(F_X\) and \(F_Y\)
are, respectively, the cumulative distribution functions of \(X\) and
\(Y\), and \(C\) is a copula - if the variables are continuous, then
\(C\) is unique and it may be called ``the copula of \((X,Y)\)''. It's
possible to note that the copula of \((X,Y)\) is the CDF of
\((F_X(X), F_Y(Y))\).

The representation provided by Sklar's theorem shows that copulas are a
``marginal-free'' way to study the dependence between two (or more)
random variables, which makes clear that this is a central notion in the
study of stochastic dependence. For further arguments regarding the
importance of copula theory in the study of stochastic dependence, see
\citet{embrechts2002} and see also the monographs from
\citet{nelsen2006} and \citet{joe2014}.

\subsection{Probability integral transform and Rosenblatt
transforms}\label{probability-integral-transform-and-rosenblatt-transforms}

An important result in both copula theory and computational statistics
is the so-called ``probability integral transform'' \citep{int-transf},
presented below. This result ensures that a cumulative distribution
function makes a variable uniformly distributed when it is used as a
transformation applied in the random variable. Given that copulas are
cumulative distribution functions of variables uniformly distributed in
\([0,1]\), the importance of this result for copula theory seems
evident.

\begin{proposition}[Probability Integral Transform]
Let $F$ be an absolutely continuous cumulative distribution function, and let $F^{-1}$ be the inverse function of $F$. Then,
\begin{itemize}
\item If $U \sim U(0,1)$, then $F^{-1}(U)$ is a random variable with cumulative distribution function $F$ (which will be denoted as $F^{-1}(U) \sim F$).
\item Conversely, if $X \sim F$, then $F(X)$ is a random variable with uniform distribution on $(0,1)$, that is, $F(X) \sim U(0,1)$.
\end{itemize}
\end{proposition}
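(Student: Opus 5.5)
The plan is to prove both items directly from the definition of a cumulative distribution function, using one key feature of $F$. Since $F$ is absolutely continuous, it is continuous. Because the statement speaks of ``the inverse function'' $F^{-1}$, I will read $F$ as strictly increasing on the interval $I=\{x: 0<F(x)<1\}$. In that case $F$ maps $I$ bijectively onto $(0,1)$, and $F^{-1}:(0,1)\to I$ is a well-defined, strictly increasing inverse.

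For the first item, I take $U\sim U(0,1)$ and $x\in I$. Because $F$ and $F^{-1}$ are strictly increasing mutual inverses, the events $\{F^{-1}(U)\le x\}$ and $\{U\le F(x)\}$ coincide. Hence
\[
P(F^{-1}(U)\le x)=P(U\le F(x))=F(x),
\]
using $P(U\le t)=t$ for $t\in[0,1]$. For $x$ outside $I$, $F(x)$ is $0$ (below $I$) or $1$ (above $I$). Since $F^{-1}(U)$ takes values in $I$ almost surely, the probability $P(F^{-1}(U)\le x)$ equals $0$ or $1$ accordingly. So the CDF of $F^{-1}(U)$ is $F$ everywhere.

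For the second item, I take $X\sim F$ and $u\in(0,1)$, and set $x_u=F^{-1}(u)$. Because $F$ is monotone and continuous, the event $\{F(X)\le u\}$ differs from $\{X\le x_u\}$ at most by $X$ landing in a set on which $F$ is constant. Such a set has $F$-probability zero, since $F$ does not increase there. Therefore
\[
P(F(X)\le u)=P(X\le x_u)=F(x_u)=u.
\]
For $u\le 0$ the probability is $0$ (note $P(F(X)=0)=0$ by the same flat-set argument), and for $u\ge 1$ it is $1$. So $F(X)\sim U(0,1)$.

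The main obstacle is making the event identities rigorous where $F$ has flat pieces or where $I$ is not all of $\mathbb{R}$. If the authors intend $F^{-1}$ to be the generalized inverse $F^{-1}(u)=\inf\{x:F(x)\ge u\}$, I would instead prove the Galois-type equivalence $F^{-1}(u)\le x \iff u\le F(x)$. This needs only right-continuity and monotonicity of $F$, and it makes the first item hold verbatim. For the second item I would then use continuity of $F$ to get $F(F^{-1}(u))=u$, and argue $P(F(X)\le u)=P(X\le F^{-1}(u))$ by noting that the symmetric difference of these events lies in a flat piece of $F$, which is $F$-null.
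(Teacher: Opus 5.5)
Your proof is correct. The paper gives no proof of this proposition: it states it as the classical probability integral transform and defers to the cited reference. So there is no competing argument to compare against, and what you wrote is the standard textbook proof, carried out completely.

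Two features of your version are worth keeping.

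First, you use only continuity of $F$, never its absolute continuity. Your argument therefore proves a slightly stronger statement than the one written.

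Second, your final paragraph with the generalized inverse $F^{-1}(u)=\inf\{x:F(x)\ge u\}$ is not just an alternative; it is needed for the statement to be meaningful in full generality. An absolutely continuous $F$ can be flat at a level strictly inside $(0,1)$, for example when the density is supported on two disjoint intervals. In that case the paper's phrase ``the inverse function of $F$'' does not literally make sense.

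In that general setting your argument also separates the hypotheses cleanly:
\begin{itemize}
\item The first item rests only on the Galois equivalence $F^{-1}(u)\le x \iff u\le F(x)$. That equivalence needs only monotonicity and right-continuity, so the first item holds for every distribution function.
\item The second item genuinely uses continuity, both to get $F(F^{-1}(u))=u$ and to ensure that the flat intervals carry zero $F$-probability, including at their endpoints.
\end{itemize}
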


An even more general result is available in \citet{rosenblatt1952}:
given a sequence of random variables \(X_1, \ldots, X_n\), the random
variables
\[F_1(X_1), \, F_{X_2|X_1}(X_2|X_1), \, \ldots, \, F_{X_n|X_1,\ldots,X_{n-1}}(X_n|X_1, \ldots, X_{n-1})\]
are independent and uniformly distributed in the \([0,1]\) range. These
functions are usually called ``Rosenblatt transforms'' and they can be
seen as a kind of generalization of the probability integral transform
to the multivariate case. In statistical literature, those
transformations have already been employed to test conditional
independence \citep{bergsma2004, Song, patra2016} and, more recently, to
construct a nonparametric notion of error term in regression models
\citep{patra2016}. A central result connecting Rosenblatt transforms and
conditional independence is given by next proposition.

\begin{proposition}\citep{patra2016}
Let $(X, Y, Z)$ be an absolutely continuous random vector. Therefore, $X \perp\!\!\!\perp Y | Z$ if and only if $F_{X|Z}(X|Z)$, $F_{Y|Z}(Y|Z)$ and $Z$ are mutually independent.
\end{proposition}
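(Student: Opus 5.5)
Set $U = F_{X|Z}(X\mid Z)$ and $V = F_{Y|Z}(Y\mid Z)$. The plan is to work with the joint law of $(U,V,Z)$, computing it by conditioning on $Z$. Since $(X,Y,Z)$ is absolutely continuous, for almost every $z$ the conditional distribution functions $F_{X|Z}(\cdot\mid z)$ and $F_{Y|Z}(\cdot\mid z)$ are continuous. Applying the Probability Integral Transform conditionally on $Z=z$ then gives $U\mid Z=z \sim U(0,1)$ and $V\mid Z=z \sim U(0,1)$. By the Rosenblatt result this also yields $U\perp\!\!\!\perp Z$ and $V\perp\!\!\!\perp Z$. Write $C_z$ for the conditional copula of $(X,Y)$ given $Z=z$. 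By Sklar's theorem applied conditionally, and using continuity of the conditional marginals,
\[
P(U\le u, V\le v \mid Z=z) = C_z(u,v)
\]
for almost every $z$, since $\{F_{X|Z}(X\mid z)\le u\}$ and $\{X\le F_{X|Z}^{-1}(u\mid z)\}$ coincide up to a null set.

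For the direction ($\Rightarrow$), suppose $X\perp\!\!\!\perp Y\mid Z$. Then $C_z(u,v)=uv$ for almost every $z$, by the conditional version of Theorem 2.4.2 in \citet{nelsen2006} (independence holds iff the copula is the product copula). Hence
\[
P(U\le u, V\le v, Z\le t) = \int_{-\infty}^{t} C_z(u,v)\, dF_Z(z) = uv\,F_Z(t).
\]
This factorisation is exactly mutual independence of $U$, $V$ and $Z$.

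For the direction ($\Leftarrow$), suppose $U$, $V$, $Z$ are mutually independent. Then $P(U\le u, V\le v\mid Z=z)=uv$ for almost every $z$, so $C_z$ is the product copula for almost every $z$. Because $X = F_{X|Z}^{-1}(U\mid Z)$ and $Y = F_{Y|Z}^{-1}(V\mid Z)$ almost surely, we get
\[
P(X\le x, Y\le y\mid Z=z) = C_z\big(F_{X|Z}(x\mid z), F_{Y|Z}(y\mid z)\big) = F_{X|Z}(x\mid z)\,F_{Y|Z}(y\mid z),
\]
which is conditional independence.

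The main obstacle is measure-theoretic. We need a regular version of the conditional distributions such that, jointly in $(u,v,z)$, the identity $P(U\le u,V\le v\mid Z=z)=C_z(u,v)$ holds outside a single $z$-null set. The plan is to handle this by working first with rational $(u,v)$, taking a countable union of null sets, and then extending to all $(u,v)$ using the right-continuity of distribution functions. We also need $C_z$ to be jointly measurable in $z$ so that the integral in the ($\Rightarrow$) step makes sense; this follows from the measurability of the conditional densities.
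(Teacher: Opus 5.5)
The paper gives no proof of this statement; it is cited from \citet{patra2016}, so there is no in-paper argument to match. Your argument is correct.

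\textbf{Relation to the paper's machinery.} Your key step is the fibrewise identity $P(U\le u, V\le v\mid Z=z)=C_z(u,v)$ for a.e.\ $z$. This is exactly what the paper establishes inside the proof of Proposition~\ref{prop:partial-copula}, before it integrates against $f_Z$. Your proof is that computation with two changes: you integrate only over $\{Z\le t\}$ rather than over all $z$, which yields the full joint law of $(U,V,Z)$ and not just the law of $(U,V)$; and you add the product-copula characterisation of independence, applied for each $z$. Your null-set handling of $\{F_{X|Z}(X\mid z)\le u\}$ is more careful than the paper's, which justifies the same step by claiming that conditional CDFs are strictly increasing.

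\textbf{Two small points.} First, the display in your ($\Leftarrow$) step is simply Sklar's theorem for the conditional law. It does not need the quantile representation $X=F^{-1}_{X|Z}(U\mid Z)$ that you give as its justification. Second, that direction genuinely uses $(U,V)\perp\!\!\!\perp Z$, not merely pairwise independence of $U$, $V$, $Z$. The FGM construction in Remark~\ref{rem:partial-copula-necessary-not-sufficient} is precisely a pairwise-but-not-mutually independent triple $(U,V,Z)$ for which $X$ and $Y$ fail to be conditionally independent given $Z$.

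\textbf{A copula-free alternative.} $U$ is a function of $(X,Z)$, and $X=F^{-1}_{X|Z}(U\mid Z)$ a.s. Hence $X\perp\!\!\!\perp Y\mid Z$ holds iff $U\perp\!\!\!\perp V\mid Z$. Since both conditional marginals are uniform, the latter holds iff the conditional law of $(U,V)$ given $Z=z$ is the product copula $\Pi(u,v)=uv$ for a.e.\ $z$, which is mutual independence. That route avoids conditional Sklar and the uniqueness of $C_z$. Your version instead delivers the result in the form the paper actually exploits: $X\perp\!\!\!\perp Y\mid Z$ iff $C_{X,Y|Z=z}=\Pi$ for a.e.\ $z$.
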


This result suggests that studying the joint distribution of
\(F_{X|Z}(X|Z)\) and \(F_{Y|Z}(Y|Z)\) is a natural way to study the
dependence between \(X\) and \(Y\) adjusted by \(Z\), because those two
quantities would be independent in the particular case that they are
conditionally independent given \(Z\) (therefore, this joint
distribution would remove the ``spurious association'' induced by
\(Z\)). In the context of causal inference, if \(Z\) is the confounder
of \((X,Y)\), the two variables would be conditionally independent given
\(Z\), provided they are causally unrelated \citep[see considerations
about fork models in][]{pearl2009, pearl2016}.

\section{The partial copula}\label{the-partial-copula}

\subsection{Definition and
interpretation}\label{definition-and-interpretation}

This section presents the notion of partial copula, originally
introduced in \citet{bergsma2004}.

\begin{definition}
Let $(X, Y, Z)$ be an absolutely continuous random vector. The partial copula of $(X, Y)$ controlled by $Z$, which we denote by $C_{X,Y;Z}$, is defined as
$$C_{X,Y;Z}(x, y) = P(U_X\leq x, U_Y\leq y),$$
where $U_X= F_{X|Z}(X|Z)$ and $U_Y= F_{Y|Z}(Y|Z)$.
\end{definition}

\begin{remark}\label{rem:remark-2}
Note that $C_{X,Y;Z}$ fully satisfies the definition of copula (Definition \ref{def:2copula}), because $C_{X,Y;Z}$ is the cumulative distribution function of two random variables uniformly distributed in $[0,1]$.
\end{remark}

The partial copula \(C_{X,Y;Z}\) may be seen - in a similar way to
partial correlation - as a notion that represents the stochastic
dependence between \((X, Y)\) after removing the effect of a covariate
\(Z\). Further, differently from the notion of conditional copula,
\(C_{X,Y;Z}\) does not depend on the specification of a specific value
\(z\) of \(Z\).

Next result shows that the partial copula \(C_{X,Y;Z}\) can be written
as a mixture of all conditional copulas \(C_{X,Y|Z=z}\); this result, in
addition to providing an analytic representation for the partial copula,
highlights the capability of the concept in capturing conditional
dependence.

\begin{proposition}\label{prop:partial-copula}
Let $(X, Y, Z)$ be a continuous random vector. The partial copula $C_{X,Y;Z}$ can be written as
$$C_{X,Y;Z}(x, y) = \int_{-\infty}^{\infty} C_{X,Y|Z=z}(x, y) f_Z(z) dz,$$
where $C_{X,Y|Z=z}$ is the conditional copula of $(X, Y)$ given $Z = z$.
\end{proposition}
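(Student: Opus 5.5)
The plan is to condition on $Z$ and apply the law of total probability. Then we identify the conditional law of $(U_X,U_Y)$ given $Z=z$ with the conditional copula $C_{X,Y|Z=z}$. Write $F_{X|Z}(\cdot|z)$ and $F_{Y|Z}(\cdot|z)$ for the conditional marginal CDFs. By absolute continuity of $(X,Y,Z)$, these are continuous for almost every $z$. Sklar's theorem, applied to the conditional distribution of $(X,Y)$ given $Z=z$, yields a unique copula $C_{X,Y|Z=z}$ such that
$$P(X\le s, Y\le t \mid Z=z) = C_{X,Y|Z=z}\bigl(F_{X|Z}(s|z),F_{Y|Z}(t|z)\bigr).$$
As noted after Sklar's theorem in the background, this copula is exactly the CDF of $\bigl(F_{X|Z}(X|z),F_{Y|Z}(Y|z)\bigr)$ under the conditional law given $Z=z$.

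First, I will write
$$C_{X,Y;Z}(x,y)=P(U_X\le x,U_Y\le y)=\mathbb{E}\bigl[P(U_X\le x,U_Y\le y\mid Z)\bigr]=\int_{-\infty}^{\infty} P(U_X\le x,U_Y\le y\mid Z=z)f_Z(z)\,dz.$$
This uses the tower property together with the existence of the density $f_Z$, which is the marginal of the joint density.

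Second, for fixed $z$ I will compute the integrand. On the event $\{Z=z\}$, and using regular conditional distributions (available since everything is real-valued), we have $U_X=F_{X|Z}(X|z)$ and $U_Y=F_{Y|Z}(Y|z)$. Hence
$$P(U_X\le x,U_Y\le y\mid Z=z)=P\bigl(F_{X|Z}(X|z)\le x,\,F_{Y|Z}(Y|z)\le y \,\big|\, Z=z\bigr).$$
By the remark after Sklar's theorem, this equals $C_{X,Y|Z=z}(x,y)$. Alternatively, one can see it directly: the conditional probability integral transform makes each coordinate uniform, and
$$\{F_{X|Z}(X|z)\le x\}=\{X\le F^{-1}_{X|Z}(x|z)\}$$
almost surely. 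So the probability equals $C_{X,Y|Z=z}\bigl(F_{X|Z}(F^{-1}_{X|Z}(x|z)|z),\dots\bigr)=C_{X,Y|Z=z}(x,y)$, using continuity of the conditional marginals and the Probability Integral Transform proposition. Substituting this into the integral gives the claim.

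The main obstacle is the second step: justifying rigorously that plugging $z$ for $Z$ inside $F_{X|Z}(X|Z)$ is legitimate under conditioning. I would handle it with the substitution property of regular conditional distributions, namely that $\mathbb{E}[g(X,Y,Z)\mid Z=z]=\mathbb{E}[g(X,Y,z)\mid Z=z]$ for jointly measurable $g$. Here $g(s,t,z)=\mathbf{1}\{F_{X|Z}(s|z)\le x,\,F_{Y|Z}(t|z)\le y\}$, which is measurable because conditional CDFs built from the joint density are jointly measurable. A secondary technicality is the set of $z$ where the inverse $F^{-1}_{X|Z}(\cdot|z)$ is only a generalized inverse. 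Equality of events then holds only up to null sets, and that is enough since those events have conditional probability zero. The exceptional $z$ form an $f_Z$-null set and do not affect the integral.
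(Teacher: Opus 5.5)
Your proposal is correct and follows essentially the same route as the paper. You condition on $Z$ via the law of total probability, then use the substitution principle for regular conditional distributions to replace $Z$ by $z$ inside $F_{X|Z}(X|Z)$ and $F_{Y|Z}(Y|Z)$, and finally identify the conditional law of $\bigl(F_{X|Z}(X|z),F_{Y|Z}(Y|z)\bigr)$ with $C_{X,Y|Z=z}$ through Sklar's theorem. Your treatment of this last step through generalized inverses and null events is in fact slightly more careful than the paper's, which asserts that the conditional CDFs are strictly increasing and invokes invariance of the copula under strictly increasing transformations (Theorem 2.4.3 of \citet{nelsen2006}).
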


\begin{proof}
Note that
\begin{align*}
C_{X,Y;Z}(x, y) &= P(U_X\leq x, U_Y\leq y) \\
&= \int_{-\infty}^{\infty} P(U_X\leq x, U_Y\leq y | Z = z) f_Z(z) dz \\
&= \int_{-\infty}^{\infty} P(F_{X|Z}(X|Z) \leq x, F_{Y|Z}(Y|Z) \leq y | Z = z) f_Z(z) dz \\
&= \int_{-\infty}^{\infty} P(F_{X|Z}(X|z) \leq x, F_{Y|Z}(Y|z) \leq y | Z = z) f_Z(z) dz \\
&= \int_{-\infty}^{\infty} C_{X,Y|Z=z}(x, y) f_Z(z) dz,
\end{align*}
where the fourth equality comes from applying the substitution principle for conditional distributions \citep[Proposition 4.1 of][]{james1981}; and the last equality comes from applying Sklar's theorem \citep{sklar1959} in $P(F_{X|Z}(X|z) \leq x, F_{Y|Z}(Y|z) \leq y | Z = z)$.

This application of Sklar's theorem is justified because $P(F_{X|Z}(X|z) \leq x, F_{Y|Z}(Y|z) \leq y | Z = z)$ is a cumulative distribution function for any $z \in \mathbb{R}$; Furthermore, $F_{X|Z}(X|z)$ and $F_{Y|Z}(Y|z)$ are, respectively, strictly increasing transformations of $x$ and $y$, because they are cumulative distribution functions for any $z \in \mathbb{R}$; then, from Theorem 2.4.3 in \citep{nelsen2006}, the conditional copula identified by Sklar's theorem in $P(F_{X|Z}(X|z) \leq x, F_{Y|Z}(Y|z) \leq y | Z = z)$ is the same as the conditional copula of $P(X \leq x, Y \leq y | Z = z)$. Therefore, the last equality is proved taking into account that the univariate conditional distributions are uniform in $[0,1]$.
\end{proof}

Proof of Proposition 3 provides a formal justification for a similar
expression that appears in proof of Proposition 1 in
\citet{gijbels2015}.

\begin{remark}
A straightforward adaptation of the proof of Proposition 3 \citep[see also Equation 2.1 in][]{joe1996families} shows that, given a continuous random vector $(X,Y,Z)$,
\begin{align*}
P(X \leq x, Y \leq y) &= \int_{-\infty}^{\infty} P(X \leq x, Y \leq y | Z = z) f_Z(z) dz \\
&= \int_{-\infty}^{\infty} C_{X,Y|Z=z}(F_{X|Z}(x|z), F_{Y|Z}(y|z)) f_Z(z) dz.
\end{align*}
Therefore, the partial copula can also be seen as the distribution of $(X,Y)$ under the assumption that each variable is independent of $Z$ but keeping the same conditional copulas as in the true distribution of $(X,Y,Z)$.
\end{remark}

An interesting consequence of the last proposition - summarized in the
following corollary - is that, when the ``simplifying assumption''
\citep[commonly used in Vine copula modeling, see][]{czado2022} is
adopted, then the partial copula coincides with the conditional copula.

\begin{corollary}\label{cor:partial-conditional-copula}
Let $(X, Y, Z)$ be a continuous random vector. If the conditional copula of $(X, Y)$ given $Z = z$ is constant as a function of $z$ ("simplifying assumption"), then
$$C_{X,Y;Z}(x, y) = C_{X,Y|Z}(x, y).$$
\end{corollary}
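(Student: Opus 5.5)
The plan is to read the corollary off Proposition \ref{prop:partial-copula}. That proposition writes the partial copula as a mixture of the conditional copulas,
\[
C_{X,Y;Z}(x,y)=\int_{-\infty}^{\infty} C_{X,Y|Z=z}(x,y)\, f_Z(z)\,dz .
\]
Under the simplifying assumption there is a single copula $C$ with $C_{X,Y|Z=z}=C$ for every $z$ in the support of $Z$; I will write $C_{X,Y|Z}$ for this common copula. Substituting it into the integrand gives
\[
C_{X,Y;Z}(x,y)=C(x,y)\int_{-\infty}^{\infty} f_Z(z)\,dz ,
\]
because $C(x,y)$ no longer depends on $z$ and can be taken outside the integral. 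The remaining integral is $1$ since $f_Z$ is a probability density, so $C_{X,Y;Z}(x,y)=C_{X,Y|Z}(x,y)$ for all $(x,y)\in[0,1]^2$, as claimed.

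The only point that needs care is the exact meaning of ``constant as a function of $z$''. Conditional copulas are determined only up to a $P_Z$-null set of values $z$, so the hypothesis should be read as $C_{X,Y|Z=z}=C$ for almost every $z$ with respect to the law of $Z$. The integral in Proposition \ref{prop:partial-copula} is taken against $f_Z(z)\,dz$, so changing the integrand on a null set does not change its value, and the computation above still holds. Similarly, values of $z$ outside the support of $Z$ carry zero density and do not matter.

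I do not expect a real obstacle. The substance is already contained in Proposition \ref{prop:partial-copula}, and the corollary follows from it by a one-line computation.
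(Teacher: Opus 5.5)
Your proposal is correct and matches the paper's proof: substitute the $z$-independent conditional copula into the mixture representation of Proposition~\ref{prop:partial-copula}, take it outside the integral, and use $\int f_Z(z)\,dz = 1$. Your added remark that the hypothesis only needs to hold for $P_Z$-almost every $z$ is a valid refinement of the paper's argument and does not change it.
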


\begin{proof}
This result is a direct consequence of integrating $C_{X,Y|Z=z}(x, y) f_Z(z)$ with respect to $z$ and taking into account that, from the theorem assumption, $C_{X,Y|Z=z}(x, y)$ is constant as a function of $z$.
\begin{align*}
C_{X,Y;Z}(x, y) &= \int_{-\infty}^{\infty} C_{X,Y|Z=z}(x, y) f_Z(z) dz \\
&= \int_{-\infty}^{\infty} C_{X,Y|Z}(x, y) f_Z(z) dz \\
&= C_{X,Y|Z}(x, y) \int_{-\infty}^{\infty} f_Z(z) dz \\
&= C_{X,Y|Z}(x, y).
\end{align*}
\end{proof}

From this result, it also becomes clear that when \((X, Y)\) are
conditionally independent given \(Z\) (which produces the conditional
copula \(C_{X,Y|Z=z}(x, y) = xy\)), then the partial copula is the
independence copula, which is coherent with lemma 3.1 of
\citet{patra2016}. However, the reciprocal statement is not true - in
the next remark, we construct a counterexample based on a model that
commonly appears in the literature as a counterexample for certain
statistical claims concerning pairwise independence or conditional
independence \citep[e.g.,][]{nelsen2012, gonzalezlopez2024}.

\begin{remark}\label{rem:partial-copula-necessary-not-sufficient}
Corollary \ref{cor:partial-conditional-copula} entails that, if $X \perp\!\!\!\perp Y | Z$, then $C_{X,Y;Z}(x, y) = xy$ (or, equivalently, $U_X$ and $U_Y$ are independent).

On the other hand, if $U_X$ and $U_Y$ are independent, it is not necessarily true that $X \perp\!\!\!\perp Y | Z$. Consider the following example: if $Z \sim U(0,1)$ and $C_{X,Y|Z=z}(x, y) = xy + (2z - 1)xy(1-x)(1-y)$ (that is, a Farlie-Gumbel-Morgenstern copula with parameter $2z - 1$), then from Proposition \ref{prop:partial-copula},
$$C_{X,Y;Z}(x, y) = \int_0^1 [xy + (2z-1)xy(1-x)(1-y)] dz = xy,$$
which is the independence copula, which entails, from the definition of the partial copula, that $U_X$ and $U_Y$ are independent. However, except for $Z = 0.5$, $X$ and $Y$ are not conditionally independent given $Z = z$. Note that if $X \perp\!\!\!\perp Z$, $Y \perp\!\!\!\perp Z$ and $X,Y \sim U(0,1)$, then this example produces a trivariate Farlie-Gumbel-Morgenstern copula .
\end{remark}

Therefore, \(U_X\perp\!\!\!\perp U_Y\) is a necessary but not sufficient
condition for \(X \perp\!\!\!\perp Y | Z\). A possible explanation for
this fact is that, in the example presented in the remark, the
conditional association of \((X, Y)\) given \(Z = z\) is positive for
some values of \(z\) and negative for other values of \(z\). In this
sense, the partial copula represents a kind of ``average conditional
association''; this fact is shown in next remark.

\begin{remark}\label{rem:partial-copula-expected-conditional-copula}
Let $g(x, y, z) = C_{X,Y|Z=z}(x, y)$. $g(x, y, Z)$ is a random variable, and Proposition \ref{prop:partial-copula} entails that
$$C_{X,Y;Z}(x, y) = E[g(x, y, Z)],$$
which means that the partial copula is the expected conditional copula.
\end{remark}

\subsection{Partial copula as an extension of partial
correlation}\label{partial-copula-as-an-extension-of-partial-correlation}

Next remark underscores the connection between the partial copula and
the notion of partial correlation.

\begin{remark}\label{rem:partial-correlation}
A possible way to define the partial correlation between two random variables $X$ and $Y$, given $Z$, is to construct two regression models $X = \alpha + \beta Z + \epsilon_{X;Z}$ and $Y = \gamma + \theta Z + \epsilon_{Y;Z}$ and then compute the Pearson's correlation between the error terms $\epsilon_{X;Z}$ and $\epsilon_{Y;Z}$ \citep{kim2015}.

More recently, \cite{patra2016} studied the problem of how to construct a nonparametric notion of error term, that is, a notion of error that does not assume a specific functional form for the relationship between the regressor and the response variable. Imposing some conditions that are reasonable to expect from a notion of error term (such as being independent from the regressor), \cite{patra2016} showed that the Rosenblatt transforms $F_{X|Z}(X|Z)$ and $F_{Y|Z}(Y|Z)$ satisfy those conditions and, therefore, may be seen as a general notion of error term \citep[for further considerations, see also Section 4 of][]{justus2024}.

Partial copulas are the joint CDF of the nonparametric error terms $F_{X|Z}(X|Z)$ and $F_{Y|Z}(Y|Z)$, and, therefore, can be seen as a nonparametric extension of the notion of partial correlation.
\end{remark}

Previous results in the literature have established that zero partial
correlation is not equivalent to conditional independence, except in
some special cases such as the multivariate normal distribution
\citep{baba2004}. Next example shows that, even under conditional
independence, the partial correlation may be arbitrarily close to 1,
which suggests that the partial correlation may be a very poor measure
of covariate-adjusted association.

\begin{example}\label{ex:partial-correlation}
Let $Z \sim N(0,1)$. Let $X|Z=z \sim N(z^2, \sigma^2)$ and $Y|Z=z \sim N(z^2, \sigma^2)$ be conditionally independent given $Z=z$. It is straightforward to see that, in this case,
$$Corr(X,Z) = Corr(Y,Z) = 0,$$
which entails, from the definition of partial correlation, that $Corr(X,Y;Z) = Corr(X,Y)$. On the other hand, $X$ and $Y$ are exchangeable and, from Theorem 2 in \cite{oneill2009}, it holds that
$$Corr(X,Y) =  \frac{Var(E(X|Z))}{Var(E(X|Z)) + E(Var(X|Z))} = \frac{Var(Z^2)}{Var(Z^2) + \sigma^2}.$$
Therefore,
$$Corr(X,Y;Z) = Corr(X,Y) = \frac{2}{2 + \sigma^2} > 0,$$
which becomes arbitrarily close to 1 as $\sigma^2$ approaches 0 - that is, $\mathsf{lim}_{\sigma^2 \to 0} Corr(X,Y;Z) = 1$ -, even though $X$ and $Y$ are conditionally independent given $Z$.
\end{example}

The result seen in Example \ref{ex:partial-correlation} may look
worrying: even when two variables are conditionally independent given a
covariate, their partial correlation may be close to 1. An intuitive
explanation for this fact may come from Remark
\ref{rem:partial-correlation}: given that the partial correlation is the
Pearson's correlation between the error terms of two linear regression
models, only the ``linear effect'' of \(Z\) on the variables is being
``removed''. In Example \ref{ex:partial-correlation}, the expectation of
each variable given \(Z\) is not linear (not even monotonic), and,
therefore, the partial correlation fails in removing the effect of \(Z\)
on the variables. On the other hand, if \(X\) and \(Y\) are
conditionally independent given \(Z\), then the partial copula is always
the independence copula. In this sense, the partial copula fully removes
the ``spurious association'' induced by \(Z\). Therefore, except when a
researcher has strong reasons to make some parametric assumptions (such
as linearity), the partial copula looks to be a much more appropriate
notion for representing covariate-adjusted statistical association.

\subsection{Effect of properties of the conditional copulas on the
partial
copula}\label{effect-of-properties-of-the-conditional-copulas-on-the-partial-copula}

Given the functional relationship between the partial and the
conditional copulas, an interesting research line that emerges is
studying to what extent properties of the conditional copulas constrain
the properties of the partial copula. A first result on this matter is
given by next proposition. The result is based on the notions of
quadrant positive and quadrant negative dependence: a copula \(C\) is
said to have quadrant positive dependence (QPD) if \(C(u, v) \geq uv\)
for all \((u, v) \in [0,1]^2\). In an analogous way, the copula has
quadrant negative dependence (QND) if \(C(u, v) \leq uv\) for all
\((u, v) \in [0,1]^2\) \citep{joe2014}. A direct consequence of the fact
that the partial copula is an expected conditional copula is given by
the following proposition.

\begin{proposition}\label{prop:quadrant-dependence}
Let $(X, Y, Z)$ be a continuous random vector. If the conditional copula of $(X, Y)$ given $Z = z$ has positive quadrant dependence for all $z \in \mathbb{R}$, then the partial copula $C_{X,Y;Z}(x, y)$ has positive quadrant dependence.

In the same way, if the conditional copula of $(X, Y)$ given $Z = z$ has negative quadrant dependence for all $z \in \mathbb{R}$, then the partial copula $C_{X,Y;Z}(x, y)$ has negative quadrant dependence.
\end{proposition}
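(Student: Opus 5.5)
The plan is to use the mixture representation of Proposition \ref{prop:partial-copula} (equivalently, Remark \ref{rem:partial-copula-expected-conditional-copula}) and integrate the pointwise inequality defining quadrant dependence against the density of $Z$. Fix an arbitrary point $(x,y) \in [0,1]^2$. The QPD hypothesis says $C_{X,Y|Z=z}(x,y) \geq xy$ for every $z \in \mathbb{R}$. Multiplying by $f_Z(z) \geq 0$ preserves the inequality, giving
$$C_{X,Y|Z=z}(x,y) f_Z(z) \geq xy\, f_Z(z) \quad \text{for all } z.$$

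Next I integrate both sides over $\mathbb{R}$ and use monotonicity of the integral. Proposition \ref{prop:partial-copula} identifies the left-hand integral as $C_{X,Y;Z}(x,y)$. The right-hand side equals $xy \int_{-\infty}^{\infty} f_Z(z)\,dz = xy$, exactly as in the proof of Corollary \ref{cor:partial-conditional-copula}. Hence $C_{X,Y;Z}(x,y) \geq xy$. Since $(x,y)$ was arbitrary and $C_{X,Y;Z}$ is a copula by Remark \ref{rem:remark-2}, the partial copula is QPD. The QND case is identical with every inequality reversed; alternatively, one can write it as $E[g(x,y,Z)] \leq xy$ using Remark \ref{rem:partial-copula-expected-conditional-copula}.

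The only point needing care, which I expect to be the sole (mild) obstacle, is that the integrals are well defined. The map $z \mapsto C_{X,Y|Z=z}(x,y)$ should be measurable, since it is a conditional probability by construction in the proof of Proposition \ref{prop:partial-copula}. It is also bounded in $[0,1]$, so integrability against the probability density $f_Z$ is automatic. I would also remark that the hypothesis only needs to hold for $f_Z$-almost every $z$, since null sets do not affect the integral.
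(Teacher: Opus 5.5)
Your proof is correct and matches the paper's, which applies monotonicity of expectation to $C_{X,Y;Z}(x,y)=E[g(x,y,Z)]$ (Remark~\ref{rem:partial-copula-expected-conditional-copula}), treating $xy$ as a degenerate random variable; that is the same as integrating your pointwise inequality against $f_Z$. Your remarks on measurability, and on the hypothesis being needed only for $f_Z$-almost every $z$, are valid refinements that the paper leaves implicit.
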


\begin{proof}
If $g(x, y, z) = C_{X,Y|Z=z}(x, y) \geq xy$ for all $z \in \mathbb{R}$, then, by interpreting the constant $xy$ as a degenerate random variable and applying the monotonicity of expectation \citep[property E2 of section 3.3 of][]{james1981}, we have the inequality
$$xy \leq E[g(x, y, Z)] = C_{X,Y;Z}(x, y).$$
Therefore, $C_{X,Y;Z}(x, y) \geq xy$, as we wished to prove. The same argument holds for negative quadrant dependence.
\end{proof}

An intuitive interpretation for Proposition
\ref{prop:quadrant-dependence} is that, whenever the conditional
association between \(X\) and \(Y\) is positive {[}negative{]} for all
values of \(Z\), the partial copula also exhibits positive
{[}negative{]} dependence.

Next proposition uses a dependence measure based on Kolmogorov distance,
which, for simplicity, we will refer to it as Kolmogorov distance
dependence (KDD). This coefficient was introduced in \citet{wolff1977}
and \citet{schweizer1981}, and is defined as
\[D(X, Y) = 4 \sup_{x,y \in [0,1]} (|C_{X,Y}(x, y) - xy|).\] Note that
the behavior of KDD differs from the behavior of the so-called
``concordance measures'', such as Spearman's correlation coefficient,
which may be equal to zero even if the two variables are not independent
\citep{garciaportugues2025}. This happens intuitively because dependence
and concordance measures have two different goals: the former ones
measure deviation from independence, and the last ones measure the
degree of monotonic association \citep[see][]{nelsen2006, joe2014}.

It is immediate to see that KDD is proportional to the Kolmogorov
distance between a given copula and the independence copula; the
constant 4 ensures that \(D(X, Y) \in [0, 1]\). This coefficient is
constructed intending to measure in what extent a copula deviates from
independence, without distinguishing between ``positive'' and
``negative'' dependence; in this sense, values closer to 0 represent
independence and cases closer to 1 represent strong dependence - for
further details, see \citet{wolff1977} and \citet{schweizer1981}. The
following proposition shows how the value of KDD in the conditional
copulas constrains the value of KDD in the partial copula.

\begin{proposition}\label{prop:partial-copula-independence}
Let $X, Y, Z$ be jointly continuous random variables. Then,
$$D(U_X, U_Y) \leq \sup_{z \in \mathbb{R}} (D(X, Y | Z = z)).$$
\end{proposition}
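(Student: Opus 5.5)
The plan is to combine Proposition \ref{prop:partial-copula} (equivalently Remark \ref{rem:partial-copula-expected-conditional-copula}) with the triangle inequality for integrals. The first point to settle is notation: by the definition of KDD, $D(U_X,U_Y)=4\sup_{x,y\in[0,1]}|C_{X,Y;Z}(x,y)-xy|$, since the copula of $(U_X,U_Y)$ is exactly the partial copula. Likewise, $D(X,Y\mid Z=z)=4\sup_{x,y}|C_{X,Y|Z=z}(x,y)-xy|$ is defined through the conditional copula.

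Fix $(x,y)\in[0,1]^2$. Because $f_Z$ integrates to one, we can write $xy=\int xy\,f_Z(z)\,dz$. Proposition \ref{prop:partial-copula} then gives
\[
C_{X,Y;Z}(x,y)-xy=\int_{-\infty}^{\infty}\bigl(C_{X,Y|Z=z}(x,y)-xy\bigr)f_Z(z)\,dz .
\]
Taking absolute values and moving them inside the integral (using $f_Z\ge 0$) gives
\[
|C_{X,Y;Z}(x,y)-xy|\le\int_{-\infty}^{\infty}|C_{X,Y|Z=z}(x,y)-xy|\,f_Z(z)\,dz .
\]
The integrand is bounded pointwise by $\sup_{u,v}|C_{X,Y|Z=z}(u,v)-uv|=\tfrac14 D(X,Y\mid Z=z)$, which is in turn at most $\tfrac14\sup_{z'}D(X,Y\mid Z=z')$. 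This bound is a constant in $z$, so integrating against $f_Z$ leaves it unchanged, and we obtain $|C_{X,Y;Z}(x,y)-xy|\le\tfrac14\sup_z D(X,Y\mid Z=z)$ for every $(x,y)$.

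Since the right-hand side does not depend on $(x,y)$, taking the supremum over $(x,y)\in[0,1]^2$ and multiplying by 4 yields the claim. If $\sup_z D(X,Y\mid Z=z)=\infty$ there is nothing to prove, although this cannot occur because every KDD value lies in $[0,1]$.

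The only delicate step is a measurability and integrability detail. We need $z\mapsto C_{X,Y|Z=z}(x,y)$ to be measurable so that the integral in Proposition \ref{prop:partial-copula} makes sense; that proposition already takes this for granted. We also need the supremum over $z$ to dominate the integrand only $f_Z$-almost everywhere, which is enough for the integral bound. In fact we never need measurability of $z\mapsto D(X,Y\mid Z=z)$ itself, because we bound it by the constant $\sup_z D(X,Y\mid Z=z)$ before integrating. So the argument reduces to the monotonicity of expectation, as in the proof of Proposition \ref{prop:quadrant-dependence}.
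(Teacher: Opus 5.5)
Your proof is correct and takes essentially the same route as the paper. Both arguments rest on writing the partial copula as the $f_Z$-average of the conditional copulas (Proposition \ref{prop:partial-copula}, Remark \ref{rem:partial-copula-expected-conditional-copula}), bounding the integrand pointwise by $k/4$ with $k=\sup_z D(X,Y\mid Z=z)$, and then using monotonicity of the integral; the only cosmetic difference is that the paper states the bound as the two-sided interval $xy-k/4\le C_{X,Y;Z}(x,y)\le xy+k/4$, whereas you go through $\bigl|\int g\,f_Z\bigr|\le\int |g|\,f_Z$, which amounts to the same thing.
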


\begin{proof}
Suppose $\sup_{z \in \mathbb{R}}(D(X, Y | Z = z)) = k$, for some $k \in [0,1]$. This means that, for every $z \in \mathbb{R}$, it holds that
$$k \geq D(X, Y | Z = z) = 4 \sup_{x,y \in [0,1]} (|C_{X,Y|Z=z}(x, y) - xy|),$$
which entails that $|C_{X,Y|Z=z}(x, y) - xy| \leq k/4$ for every $x, y \in [0, 1]$, $z \in \mathbb{R}$. Therefore,
$$C_{X,Y|Z=z}(x, y) \in \left[xy - \frac{k}{4}, xy + \frac{k}{4}\right].$$

The remainder of the proof comes from remembering that the partial copula evaluated in a point $(x, y) \in [0, 1]^2$ is the expectation of a random variable (see Remark \ref{rem:partial-copula-expected-conditional-copula}), which enables using the monotonicity of expectation: property E2 of section 3.3 of \citet{james1981} readily entails that, if a random variable belongs almost surely to an interval, then the same happens to its expectation. Therefore,
$$xy - \frac{k}{4} \leq C_{X,Y;Z}(x, y) \leq xy + \frac{k}{4},$$
for every $x, y \in [0, 1]$. Therefore, $|C_{X,Y;Z}(x, y) - xy| \leq k/4$, which entails that
$$D(U_X, U_Y) = 4 \sup_{x,y \in [0,1]} (|C_{X,Y;Z}(x, y) - xy|) \leq k.$$
\end{proof}

The last result can be intuitively interpreted in the following way:
regardless of simplifying assumption holding or not, if all the
conditional copulas are distant from independence at most \(k\), then
the partial copula is distant from independence at most \(k\) too. This
entails that, if all conditional copulas are close to the independence
case, then the partial copula is necessarily close to independence.

Furthermore, it is possible to also bound Spearman's correlation
coefficient (here denoted by the letter \(\rho\)) in the partial copula,
provided an upper bound for the KDD of the conditional copulas is also
known (it is worth remembering that Spearman's correlation is also a
functional of the copula, see \citet{fredricks2007}).

\begin{proposition}\label{def:spearman-bounds}
Let $X, Y, Z$ be jointly continuous random variables. If $(D(X, Y | Z = z)) \leq k$ for all $z \in \mathbb{R}$, then
$$\max(-1, -3k) \leq \rho(X, Y; Z) \leq \min(1, 3k).$$
\end{proposition}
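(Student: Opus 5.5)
The plan is to combine Proposition \ref{prop:partial-copula-independence} with the integral representation of Spearman's correlation as a functional of the copula. Here $\rho(X,Y;Z)$ is read as Spearman's correlation computed from the partial copula $C_{X,Y;Z}$, i.e. the Spearman correlation of $(U_X,U_Y)$. The standard formula (Nelsen, Theorem 5.1.6) gives
$$\rho(X,Y;Z) = 12\int_0^1\int_0^1 \bigl(C_{X,Y;Z}(x,y) - xy\bigr)\,dx\,dy .$$

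First, from the hypothesis $D(X,Y|Z=z)\le k$ for all $z$ we get $\sup_z D(X,Y|Z=z)\le k$. Proposition \ref{prop:partial-copula-independence} then yields $D(U_X,U_Y)\le k$. Equivalently, as shown in its proof, $|C_{X,Y;Z}(x,y)-xy|\le k/4$ for every $(x,y)\in[0,1]^2$.

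Second, we bound the integrand pointwise. Integrating over the unit square (area $1$) gives
$$|\rho(X,Y;Z)| \le 12\int_0^1\int_0^1 |C_{X,Y;Z}(x,y)-xy|\,dx\,dy \le 12\cdot \frac{k}{4} = 3k .$$
So $-3k\le \rho(X,Y;Z)\le 3k$. Since Spearman's correlation of any copula lies in $[-1,1]$, the claimed bounds $\max(-1,-3k)\le\rho(X,Y;Z)\le\min(1,3k)$ follow.

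No step is a real obstacle. The only point needing care is to state explicitly that $\rho(X,Y;Z)$ is the Spearman coefficient of the partial copula. Because $U_X,U_Y$ are uniform (Remark \ref{rem:remark-2}), this coefficient equals the Pearson correlation of $(U_X,U_Y)$, which justifies applying the copula integral formula to $C_{X,Y;Z}$.

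An alternative route skips Proposition \ref{prop:partial-copula-independence}. Insert Proposition \ref{prop:partial-copula} into the integral and apply Fubini, which gives $\rho(X,Y;Z)=E[\rho(X,Y|Z)]$. Then bound each conditional Spearman coefficient by $3k$ in absolute value. This gives the same result.
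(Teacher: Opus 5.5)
Your proof is correct and follows the paper's own argument: take the pointwise bound $|C_{X,Y;Z}(x,y)-xy|\le k/4$ from Proposition \ref{prop:partial-copula-independence}, integrate it against Nelsen's representation $\rho = 12\int_0^1\int_0^1 C_{X,Y;Z}(x,y)\,dx\,dy - 3$ (which you write equivalently as $12\int_0^1\int_0^1 (C_{X,Y;Z}(x,y)-xy)\,dx\,dy$), and then clip to $[-1,1]$. Your alternative route through Proposition \ref{prop:spearman-partial-conditional} is also valid; it is the same pointwise estimate applied to each conditional copula before averaging over $Z$.
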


\begin{proof}
The proof of Proposition \ref{prop:partial-copula-independence} entails that, if $(D(X, Y | Z = z)) \leq k$ for all $z \in \mathbb{R}$, then
$$xy - \frac{k}{4} \leq C_{X,Y;Z}(x, y) \leq xy + \frac{k}{4},$$
which entails that
$$\int_0^1 \int_0^1 C_{X,Y;Z}(x, y) dx dy \in \left[\frac{1-k}{4}, \frac{1+k}{4}\right].$$

Theorem 5.1.6 in \citet{nelsen2006} states that, given a copula $C$, its Spearman's correlation can be represented as $12 \int_0^1 \int_0^1 C_{X,Y;Z}(x, y) dx dy - 3$. Applying this representation in the partial copula, it holds that
$$-3k \leq \rho(X, Y; Z) \leq 3k,$$
as we wished to prove. The final bound in the proposition is obtained taking into account that Spearman's coefficient belongs to the range $[-1, 1]$ \citep[see Theorem 5.1.9 in][]{nelsen2006}.
\end{proof}

An even better bound can be obtained for the Kendall's tau correlation
coefficient (denoted by \(\tau\)), which is also a functional of the
copula \citep{nelsen2006} - see next Proposition.

\begin{proposition}
Let $X, Y, Z$ be jointly continuous random variables. If $(D(X, Y | Z = z)) \leq k$ for all $z \in \mathbb{R}$, then
$$\max(-1, -2k) \leq \tau(X, Y; Z) \leq \min(1, 2k).$$
\end{proposition}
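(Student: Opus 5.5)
The plan is to mimic the proof of Proposition \ref{def:spearman-bounds}. I would start from the uniform bound established in the proof of Proposition \ref{prop:partial-copula-independence}. If $D(X,Y|Z=z)\le k$ for all $z$, then, writing $C = C_{X,Y;Z}$ and $\Pi(x,y)=xy$,
$$|C(x,y)-\Pi(x,y)|\le k/4 \quad \text{for all } (x,y)\in[0,1]^2.$$
I would then use the standard copula representation of Kendall's tau (Theorem 5.1.3 in \citet{nelsen2006}),
$$\tau(X,Y;Z) = 4\int_0^1\!\!\int_0^1 C(x,y)\,dC(x,y) - 1,$$
applied to the partial copula. Since $C$ is a genuine copula (Remark \ref{rem:remark-2}), this is legitimate.

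The key step is to decompose $\int C\,dC$ so that $C-\Pi$ is always integrated against a probability measure. I would write
$$\int C\,dC = \int (C-\Pi)\,dC + \int \Pi\,dC .$$
Next I would use the identity $\int \Pi\,dC = \int C\,d\Pi$, which holds because both sides equal $\int_0^1\!\int_0^1 C(x,y)\,dx\,dy$. The left side is $E[U_XU_Y]$, and a Hoeffding-type integration by parts gives $E[U_XU_Y]=\int\!\!\int C$ (this also underlies the Spearman formula quoted from Theorem 5.1.6 of \citet{nelsen2006}). Then
$$\int C\,d\Pi = \int (C-\Pi)\,d\Pi + \int \Pi\,d\Pi = \int (C-\Pi)\,d\Pi + \tfrac14 .$$
Altogether this gives
$$\int C\,dC - \tfrac14 = \int (C-\Pi)\,dC + \int (C-\Pi)\,d\Pi .$$

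Each integral on the right is the expectation of $(C-\Pi)(U,V)$ under a probability measure on $[0,1]^2$: the partial copula measure for the first, and the uniform measure for the second. By monotonicity of expectation, as in the earlier proofs, each is bounded in absolute value by $k/4$. Hence $|\int C\,dC - 1/4|\le k/2$, and therefore
$$|\tau(X,Y;Z)| = 4\left|\int C\,dC - \tfrac14\right| \le 2k .$$
Finally, intersecting with the universal range $\tau\in[-1,1]$ yields $\max(-1,-2k)\le\tau(X,Y;Z)\le\min(1,2k)$.

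I expect the main obstacle to be justifying the symmetry identity $\int \Pi\,dC = \int C\,d\Pi$. A naive bound that replaces $C$ by $\Pi\pm k/4$ directly in $\int C\,dC$ fails, because the integrating measure $dC$ itself depends on $C$ and must be handled as well. I would justify the identity either by citing the corresponding result in \citet{nelsen2006} (the relation $\int C_1\,dC_2 = \tfrac12 - \int \partial_u C_1\,\partial_v C_2$, applied with one of the copulas equal to $\Pi$) or by a direct Fubini computation. In that computation I would write $E[U_XU_Y]=E\int_0^1\!\int_0^1 \mathbf 1\{U_X> s\}\mathbf 1\{U_Y> t\}\,ds\,dt$ and note that $P(U_X>s,U_Y>t)=1-s-t+C(s,t)$, which integrates to $\int\!\!\int C$.
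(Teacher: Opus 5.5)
Your proof is correct and is essentially the paper's argument. Your term $\int (C-\Pi)\,dC$ is the paper's comparison of $E[C_{X,Y;Z}(U_X,U_Y)]$ with $E[U_XU_Y]$, and your term $\int (C-\Pi)\,d\Pi = E[U_XU_Y]-\tfrac14$ is exactly $\rho(X,Y;Z)/12$, which the paper bounds by invoking $|\rho(X,Y;Z)|\le 3k$ from Proposition~\ref{def:spearman-bounds}. The only differences are presentational: you phrase the argument as integrals against probability measures without naming Spearman's $\rho$, and you verify $E[U_XU_Y]=\int\!\!\int C$ directly by Fubini instead of citing Nelsen's formula.
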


\begin{proof}
We have seen in the proof of Proposition \ref{prop:partial-copula-independence} that, if $(D(X, Y | Z = z)) \leq k$ for all $z \in \mathbb{R}$, then
$$xy - \frac{k}{4} \leq C_{X,Y;Z}(x, y) \leq xy + \frac{k}{4}.$$
Therefore, inserting $U_X$ and $U_Y$ as arguments, we have
$$U_X U_Y - \frac{k}{4} \leq C_{X,Y;Z}(U_X, U_Y) \leq U_X U_Y + \frac{k}{4}.$$
The monotonicity of expectation \citep[property E2 of section 3.3 of][]{james1981} entails that
$$E[U_X U_Y] - \frac{k}{4} \leq E[C_{X,Y;Z}(U_X, U_Y)] \leq E[U_X U_Y] + \frac{k}{4}.$$
The Equation 5.1.15b from Theorem 5.1.6 in \citet{nelsen2006} entails that $\rho(X,Y;Z) = 12 E[U_X U_Y] -3$, which means that
$$\frac{\rho(X,Y;Z) + 3}{12} - \frac{k}{4} \leq E[C_{X,Y;Z}(U_X, U_Y)] \leq \frac{\rho(X,Y;Z) + 3}{12} + \frac{k}{4}.$$
The final argument comes from remembering that, according to Theorem 5.1.3 in \citet{nelsen2006}, Kendall's tau correlation coefficient can be represented as $4E[C_{X,Y;Z}(U_X, U_Y)] - 1$, which leads us to the conclusion that
$$\frac{\rho(X,Y;Z) + 3}{3} - k -1 \leq \tau(X, Y; Z) \leq \frac{\rho(X,Y;Z) + 3}{3} + k -1,$$
and, from the inequality $-3k \leq \rho(X, Y; Z) \leq 3k,$ it follows that
$$-2k \leq \tau(X, Y; Z) \leq 2k.$$
As in the previous proof, the final bounds are obtained using the fact that Kendall's coefficient belongs to the range $[-1, 1]$ \citep[see Theorem 5.1.9 in][]{nelsen2006}.
\end{proof}

Additionally, the connection between the correlation in the conditional
copulas and in the partial copula can be summarized in the following
theorem.

\begin{proposition}\label{prop:spearman-partial-conditional}
Let $X, Y, Z$ be jointly continuous random variables. Then,
$$\rho(X, Y; Z) = \int_{-\infty}^{\infty} \rho(X, Y | Z = z) f_Z(z) dz.$$
\end{proposition}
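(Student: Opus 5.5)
The plan is to combine the integral representation of Spearman's rho (Theorem 5.1.6 in \citet{nelsen2006}) with the mixture representation of the partial copula from Proposition \ref{prop:partial-copula}, and then interchange the order of integration. Recall that for any copula $C$,
$$\rho = 12 \int_0^1\int_0^1 C(x,y)\,dx\,dy - 3.$$
Here $\rho(X,Y;Z)$ is the Spearman coefficient associated with the partial copula $C_{X,Y;Z}$, and $\rho(X,Y|Z=z)$ is the one associated with the conditional copula $C_{X,Y|Z=z}$.

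First, applying the representation to $C_{X,Y;Z}$ and substituting Proposition \ref{prop:partial-copula} gives
$$\rho(X,Y;Z) = 12\int_0^1\int_0^1 \int_{-\infty}^{\infty} C_{X,Y|Z=z}(x,y) f_Z(z)\,dz\,dx\,dy - 3.$$

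Second, I will swap the $z$-integral with the $(x,y)$-integrals by Tonelli's theorem. The integrand $C_{X,Y|Z=z}(x,y)f_Z(z)$ is nonnegative and bounded by $f_Z(z)$, which is integrable on $[0,1]^2\times\mathbb{R}$. Joint measurability in $(x,y,z)$ follows because $C_{X,Y|Z=z}(x,y)=P(U_X\le x,U_Y\le y\mid Z=z)$ is a version of a regular conditional distribution. Alternatively, one can avoid measurability subtleties by noting that each conditional copula is continuous in $(x,y)$ (it is Lipschitz), so the function is Carathéodory and hence jointly measurable. This yields
$$\rho(X,Y;Z) = \int_{-\infty}^{\infty}\left(12\int_0^1\int_0^1 C_{X,Y|Z=z}(x,y)\,dx\,dy\right) f_Z(z)\,dz - 3.$$

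Third, using $\int f_Z(z)\,dz = 1$, I rewrite the constant as $3=\int 3 f_Z(z)\,dz$ and move it inside the integral. The integrand then becomes exactly $12\int\!\!\int C_{X,Y|Z=z} - 3 = \rho(X,Y|Z=z)$, again by Theorem 5.1.6 in \citet{nelsen2006}, now applied to each conditional copula. This gives the claimed identity.

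An equivalent route works through expectations, following Remark \ref{rem:partial-copula-expected-conditional-copula}: by Fubini, $E[U_XU_Y]=E\big[E[U_XU_Y\mid Z]\big]$. Conditionally on $Z=z$, the pair $(U_X,U_Y)$ has distribution $C_{X,Y|Z=z}$, so $12E[U_XU_Y\mid Z=z]-3=\rho(X,Y|Z=z)$.

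The main obstacle is justifying the interchange of integrals, namely the joint measurability of $z\mapsto C_{X,Y|Z=z}(x,y)$. I expect this to be handled by the boundedness and nonnegativity argument above, together with the same conditional-distribution framework (\citet{james1981}) already used in the proof of Proposition \ref{prop:partial-copula}.
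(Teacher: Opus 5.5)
Your proposal is correct and follows essentially the same route as the paper: apply Nelsen's representation $\rho = 12\int_0^1\int_0^1 C\,dx\,dy - 3$ to the partial copula, substitute the mixture formula of Proposition \ref{prop:partial-copula}, swap the order of integration, and re-apply the representation to each conditional copula. Your Tonelli and joint-measurability justification for the swap (a Carath\'eodory function with the integrable bound $f_Z$) makes explicit a step the paper performs without comment.
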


\begin{proof}
Applying Equation 5.1.15c of \citet{nelsen2006} in the partial copula, we obtain
$$\rho(X, Y; Z) = 12 \int_0^1 \int_0^1 C_{X,Y;Z}(x, y) dx dy - 3.$$
Now using representation from Proposition \ref{prop:partial-copula}, we obtain
\begin{align*}
\rho(X, Y; Z) &= 12 \int_0^1 \int_0^1 \int_{-\infty}^{\infty} C_{X,Y|Z=z}(x, y) f_Z(z) dz dx dy - 3 \\
&= 12 \int_{-\infty}^{\infty} \left(\int_0^1 \int_0^1 C_{X,Y|Z=z}(x, y) dx dy \right) f_Z(z) dz - 3 \\
&= 12 \int_{-\infty}^{\infty} \left(\frac{\rho(X, Y | Z = z) + 3}{12}\right) f_Z(z) dz - 3 \\
&= \int_{-\infty}^{\infty} \rho(X, Y | Z = z) f_Z(z) dz.
\end{align*}
\end{proof}

That is, the Spearman's correlation induced by the partial copula is the
expected conditional Spearman's correlation - this greatly clarifies the
role of conditional correlations in determining partial correlation.
Finally, a consequence of this result in a family of copulas studied in
\citet{rodriguezlallena2004} is illustrated in the following example.

\begin{example}
\cite{rodriguezlallena2004} studied a nonparametric family of copulas where the copula of two random variables $X',Y'$ can be written as $C_{X',Y'}(x, y) = xy + f(x)g(y)$. Now, consider a case where all the conditional copulas of $X,Y$ given $Z$ can be written as
$$C_{X,Y|Z=z}(x, y) = xy + f(x,z)g(y,z),$$
which means that all the conditional copulas belong to the family of copulas studied in \cite{rodriguezlallena2004}. \cite{komelj2010} showed that all copulas in this family have Spearman's correlation coefficient belonging to $[-3/4, 3/4]$. Therefore, applying Proposition \ref{prop:spearman-partial-conditional} and monotonicity of expectation \citep[property E2 of section 3.3 of][]{james1981}, it is straightforward to see that
$$-\frac{3}{4} \leq \rho(X, Y; Z) \leq \frac{3}{4}.$$
Therefore, the fact that all the conditional copulas belong to the family of copulas studied in \cite{rodriguezlallena2004} entails that the Spearman's correlation coefficient in the partial copula is also bounded by $[-3/4, 3/4]$.
\end{example}

\section{Simulation Study}\label{simulation-study}

The theoretical results in Section 3 establish that the partial copula
\(C_{X,Y;Z}\) captures the dependence between \(X\) and \(Y\) after
removing the effect of \(Z\). In particular, Proposition
\ref{prop:spearman-partial-conditional} shows that the Spearman
correlation induced by the partial copula equals the expected
conditional correlation, while Proposition
\ref{prop:quadrant-dependence} establishes that the sign of quadrant
dependence is preserved when it is constant across all conditional
copulas. In this section, we conduct a simulation study to illustrate
these results and to demonstrate how the partial copula can reveal the
true conditional association structure when marginal correlations are
confounded. All simulations were performed in R using the
\texttt{rvinecopulib} package \citep{rvinecopulib}. The manuscript and
all analyses are fully reproducible through the \texttt{rix} package
\citep{rix}, and the source code and environment files are available in
the accompanying GitHub repository:
\url{https://github.com/felipelfv/copula_project}.

\subsection{Design}\label{design}

We employ a C-vine copula with \(Z\) as the root node, which decomposes
the joint distribution of \((X, Y, Z)\) through the pair copulas
\(C_{X,Z}\), \(C_{Y,Z}\), and \(C_{X,Y|Z}\). For each configuration, we
generate a single sample of \(n = 5000\) observations
\((X_i, Y_i, Z_i)_{i=1}^{n}\) and compute the partial copula
pseudo-observations via h-functions,
\[U_{X,i} = h_2(X_i, Z_i; C_{X,Z}) \quad \text{and} \quad U_{Y,i} = h_2(Y_i, Z_i; C_{Y,Z}),\]
where \(h_2(u, v; C) = \partial C(u,v)/\partial v\) is the conditional
distribution function of the first argument given the second.

To assess how confounding and conditional dependence interact, we
compare the marginal and partial rank correlations. Specifically, we
compute Spearman's \(\rho\) and Kendall's \(\tau\) for both the original
pair \((X, Y)\) and the partial copula pair \((U_X, U_Y)\). A
discrepancy between these quantities indicates confounding: dependence
between \(X\) and \(Y\) that is induced by \(Z\) rather than by direct
conditional association.

We consider eleven scenarios, summarized in Table 1. The triplet
\((s_{XZ}, s_{YZ}, s_{XY|Z})\) indicates the sign of dependence in each
pair copula: \(+\) for positive, \(-\) for negative, and \(\text{ind}\)
for independence. Scenarios 1--10 satisfy the simplifying assumption, so
that \(C_{X,Y|Z=z}\) does not vary with \(z\) and Corollary
\ref{cor:partial-conditional-copula} applies. Scenario 11 violates the
simplifying assumption by letting the conditional copula parameter vary
as a function of \(Z\) through specifications \(\theta(z) = \exp(z)\),
\(\theta(z) = -\exp(z)\), and \(\theta(z) = 1 - 2z\); the last of these
illustrates Remark \ref{rem:partial-copula-necessary-not-sufficient},
where positive and negative conditional associations cancel out in
expectation.

Across scenarios, we consider multiple copula families (Frank, Gumbel,
Clayton, and Gaussian) over a range of parameter values to ensure
robustness of the findings. Negative dependence for Gumbel and Clayton
copulas is achieved via 90-degree rotation.

\small

\begin{longtable}[]{@{}ccl@{}}
\caption{Summary of simulation scenarios. The triplet
\((s_{XZ}, s_{YZ}, s_{XY|Z})\) denotes the sign of dependence in each
pair copula: \(+\) positive, \(-\) negative, \(\text{ind}\)
independence, and \(\cdot\) varies across
configurations.}\tabularnewline
\toprule\noalign{}
Scenario & Structure & Description \\
\midrule\noalign{}
\endfirsthead
\toprule\noalign{}
Scenario & Structure & Description \\
\midrule\noalign{}
\endhead
\bottomrule\noalign{}
\endlastfoot
1 & \((+,+,+)\) & Positive confounding, positive conditional \\
2 & \((+,+,\text{ind})\) & Positive confounding, conditional indep. \\
3 & \((-,-,\text{ind})\) & Negative confounding, conditional indep. \\
4 & \((+,-,\text{ind})\) & Opposite confounding, conditional indep. \\
5 & \((+,+,-)\) & Positive confounding, negative conditional \\
6 & \((-,-,+)\) & Negative confounding, positive conditional \\
7 & \((-,-,-)\) & Negative confounding, negative conditional \\
8 & \((+,-,+)\) & Opposite confounding, positive conditional \\
9 & \((+,-,-)\) & Opposite confounding, negative conditional \\
10 & \((\text{ind},\text{ind},\cdot)\) & No confounding \\
11 & --- & Simplifying assumption violated \\
\end{longtable}

\normalsize

\subsection{Results}\label{results}

Table 2 in the Appendix reports Spearman's \(\rho\) and Kendall's
\(\tau\) for both the marginal pair \((X,Y)\) and the partial copula
pair \((U_X, U_Y)\) across all scenarios. The results confirm the
theoretical predictions. In conditional independence scenarios (2, 3,
and 4), the partial correlations are negligible while marginal
correlations can be substantial, demonstrating that the partial copula
successfully removes confounding. In scenarios where confounding
reinforces conditional dependence (1, 6, and 9), the partial
correlations correctly recover the conditional association, which is
attenuated relative to the marginal correlations. Notably, scenarios 5
and 8 exhibit Simpson's paradox: marginal and partial correlations have
opposite signs. In these scenarios, the confounding effect opposes the
conditional dependence --- for instance, in scenario 5 \((+,+,-)\), the
product of two positive marginal dependencies induces a positive
spurious correlation that counteracts the negative conditional
dependence. This illustrates how the partial copula can reveal
conditional associations that are masked or reversed by confounding.
Finally, Scenario 11 examines violations of the simplifying assumption.
In cases 11a and 11b, where \(\theta(z) = \exp(z)\) and
\(\theta(z) = -\exp(z)\) respectively, the sign of conditional
dependence remains constant across all values of \(Z\), and the partial
correlations closely match the marginal correlations, consistent with
Proposition \ref{prop:quadrant-dependence}. In contrast, case 11c
deserves particular attention. Here the conditional copula parameter
\(\theta(z) = 1 - 2z\) is a Gaussian copula whose correlation switches
sign at \(z = 0.5\): for \(z < 0.5\) the conditional dependence is
positive, while for \(z > 0.5\) it becomes negative. As a result, the
partial copula --- which averages over all conditional copulas ---
yields near-zero partial correlation despite strong conditional
dependence for almost all values of \(Z\), confirming Remark
\ref{rem:partial-copula-necessary-not-sufficient}. Figure
\ref{fig:scenario-11c-detail} illustrates this by displaying the
conditional scatter plots for \(Z < 0.5\) and \(Z > 0.5\) alongside the
partial copula: the opposing dependence structures are clearly visible
in the conditional panels but cancel out in the partial copula. This
highlights a fundamental limitation: the partial copula captures average
conditional dependence and may not be the best tool when one is
interested in local, \(z\)-specific dependence.

\begin{figure}

{\centering \includegraphics[width=1\linewidth]{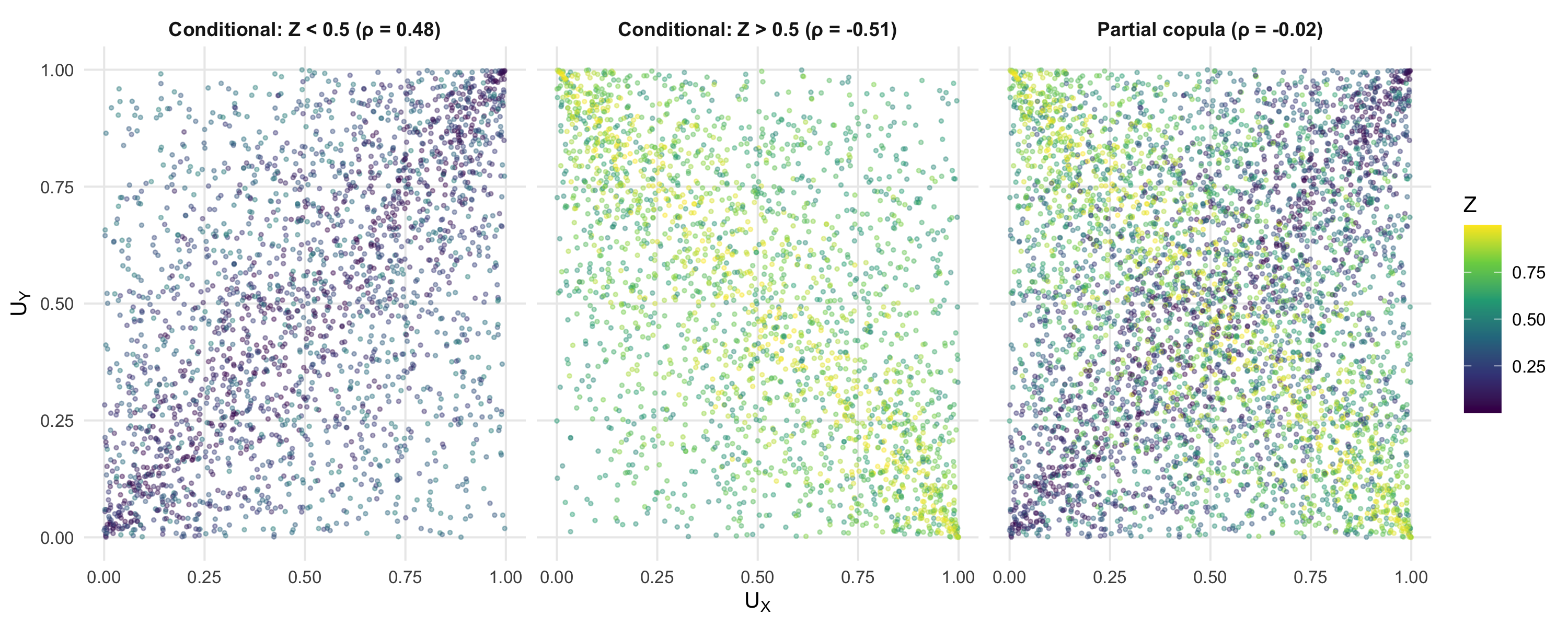} 

}

\caption{Scenario 11c: Gaussian copula with $\theta(z) = 1 - 2z$. Left: conditional scatter for $Z < 0.5$ (positive dependence). Center: conditional scatter for $Z > 0.5$ (negative dependence). Right: partial copula averaging over all $Z$. Color gradient represents $Z$; $\rho$ denotes Spearman's correlation.}\label{fig:scenario-11c-detail}
\end{figure}

\section{Discussion}\label{discussion}

In this paper, we discussed the notion of partial copula, originally
introduced in \citet{bergsma2004} to test conditional independence. This
test was introduced upon the observation that, under conditional
independence, the variables \(F_{X|Z}(X|Z)\) and \(F_{Y|Z}(Y|Z)\),
usually known as Rosenblatt transformations, are independent. Although
we recognize the usefulness of the partial copula as a tool to test
conditional independence, our main focus has been to highlight its
broader applicability as a general representation of conditional
association, which has not been well explored previously in the
literature, to the best of our knowledge. The starting point of this
paper is the observation that the variables \(F_{X|Z}(X|Z)\) and
\(F_{Y|Z}(Y|Z)\) are uniformly distributed in \([0,1]\), which means
that their joint cumulative distribution function is a copula, which
enables studying several dependence concepts constructed in the copula
literature, such as Spearman's and Kendall's coefficients.

First, we showed in Proposition 3 and Remark 2 that the partial copula
of \((X,Y)\) adjusted by \(Z\) may be viewed as the joint distribution
of \((X,Y)\) under the assumption that \(X\) and \(Y\) are both
independent of \(Z\) (marginally) but preserving their conditional
copula. This interpretation clarifies the role of the partial copula as
a general representation of conditional association, since it captures
the dependence between \(X\) and \(Y\) after removing the effect of
\(Z\). Second, we established several properties of the partial copula,
including bounds for Spearman's and Kendall's coefficients in the
partial copula when the maximum value of the Kolmogorov distance
dependence in the conditional copulas is known (Propositions 6 and 7),
the connection between conditional and partial correlations (Proposition
8), and a characterization of the partial copula when the conditional
copulas belong to a specific parametric family (Example 2). At an
intuitive level, those results show that the partial copula captures the
properties of conditional copulas while removing the effect of \(Z\).

In the context of causal inference, if \(Z\) is a confounding variable,
the dependence between \(X\) and \(Y\) adjusted by \(Z\) - as
traditionally estimated in linear models and other statistical
techniques - may be interpreted as the true causal connection between
the two variables. The properties proved in this paper underscore the
capability of the partial copula to capture this causal connection
without requiring restrictive assumptions such as linearity or the
simplifying assumption used in vine copula modeling.

Finally, a simulation study was conducted to compare how the partial and
the marginal copulas may be different. In scenarios presenting
conditional independence (scenarios 2, 3 and 4), the partial copula is
the independence copula, while the marginal copula contains the spurious
association induced by \(Z\). In scenarios where the confounding effect
reinforces the conditional dependence (1, 6, and 9), the association
observed in the marginal copula was stronger than in the partial copula.
In scenarios where the confounding effect opposes the conditional
dependence (5, 7, and 8), Simpson's paradox may arise, which means
marginal and partial correlations having opposite signs. In scenarios
with no confounding (10), the marginal and the partial copula are the
same. Finally, scenario 11 illustrated cases violating the simplifying
assumption, which shows a fundamental difference between partial and
conditional copulas: conditional copulas depend on the specification of
the value of \(Z\) being conditioned on, while the partial copula does
not. In this sense, the partial copula may be seen as a ``summary'' of
the conditional copulas across all values of \(Z\).

The diversity of scenarios simulated shows that, in several contexts,
the marginal association and the average conditional association may be
very different - including scenarios known as Simpson's paradox, where
even the signs of correlation coefficients are different. In causal
inference literature \citep{pearl2009, pearl2016}, it is well known that
marginal associations may be very different from causal effects due to
confounding. The results in this paper show that the partial copula is a
useful tool to estimate the average conditional association, which may
be interpreted as a causal effect under standard assumptions in causal
inference. From this paper, we hope to have shown both the importance of
using conditional association instead of marginal association in several
contexts, and the capability of the partial copula to capture this
conditional association in a general way.

Some directions for future research include studying directed dependence
coefficients \citep{azadkia2021simple, fuchs} in the context of partial
copulas, and exploring the use of partial copulas in the context of
causal discovery algorithms. Furthermore, further advances on the
estimation of partial copulas from real data are of great interest for
the scientific community, since it would allow researchers to apply the
insights of this paper in empirical contexts. \citet{bergsma2004}
discussed the use of kernel estimators to estimate the conditional CDFs;
however further exploration on the properties of such estimators seems
necessary. Although there is a rich literature in the estimation of
conditional copulas \citep[e.g.,][]{VERAVERBEKE2011, math14050914}, the
estimation of partial copulas itself is a subject that requires further
developments.

\section{Acknowledgment}\label{acknowledgment}

The authors thank Professor Concepción Ausín Olivera (Carlos III
University of Madrid, Spain) for the feedback given on a previous
version of this manuscript, which helped improve this work.

\newpage

\bibliography{references}

\newpage

\section*{Appendix}\label{appendix}
\addcontentsline{toc}{section}{Appendix}

\begingroup\fontsize{8}{10}\selectfont

\begin{longtable}[t]{llcccc}
\caption{\label{tab:results-table}Spearman's $\rho$ and Kendall's $\tau$ for the marginal pair $(X, Y)$ and the partial copula pair $(U_X, U_Y)$ across all simulated scenarios. The Specification column reports the copula family and the parameter triplet $\theta = (\theta_{XZ}, \theta_{YZ}, \theta_{XY|Z})$; (R) indicates a 90-degree rotation and ind denotes independence. P-values are reported in parentheses next to the corresponding correlation coefficient.}\\
\toprule
\multicolumn{2}{c}{ } & \multicolumn{2}{c}{Marginal $(X, Y)$} & \multicolumn{2}{c}{Partial $(U_X, U_Y)$} \\
\cmidrule(l{3pt}r{3pt}){3-4} \cmidrule(l{3pt}r{3pt}){5-6}
Case & Specification & ρ & τ & ρ & τ\\
\midrule
\endfirsthead
\caption[]{Spearman's $\rho$ and Kendall's $\tau$ for the marginal pair $(X, Y)$ and the partial copula pair $(U_X, U_Y)$ across all simulated scenarios. The Specification column reports the copula family and the parameter triplet $\theta = (\theta_{XZ}, \theta_{YZ}, \theta_{XY|Z})$; (R) indicates a 90-degree rotation and ind denotes independence. P-values are reported in parentheses next to the corresponding correlation coefficient. \textit{(continued)}}\\
\toprule
\multicolumn{2}{c}{ } & \multicolumn{2}{c}{Marginal $(X, Y)$} & \multicolumn{2}{c}{Partial $(U_X, U_Y)$} \\
\cmidrule(l{3pt}r{3pt}){3-4} \cmidrule(l{3pt}r{3pt}){5-6}
Case & Specification & ρ & τ & ρ & τ\\
\midrule
\endhead

\endfoot
\bottomrule
\endlastfoot
\addlinespace[0.3em]
\multicolumn{6}{l}{\textbf{Scenario 1: (+, +, +)}}\\
\hspace{1em}1a & Frank, θ = (1, 1, 1) & 0.174 (p < 0.001) & 0.117 (p < 0.001) & 0.152 (p < 0.001) & 0.102 (p < 0.001)\\
\hspace{1em}1b & Frank, θ = (3, 3, 1) & 0.326 (p < 0.001) & 0.220 (p < 0.001) & 0.164 (p < 0.001) & 0.110 (p < 0.001)\\
\hspace{1em}1c & Frank, θ = (5, 5, 1) & 0.495 (p < 0.001) & 0.341 (p < 0.001) & 0.157 (p < 0.001) & 0.105 (p < 0.001)\\
\hspace{1em}1d & Gumbel, θ = (1.5, 1.5, 1.5) & 0.614 (p < 0.001) & 0.442 (p < 0.001) & 0.479 (p < 0.001) & 0.336 (p < 0.001)\\
\hspace{1em}1e & Gumbel, θ = (2, 2, 1.5) & 0.731 (p < 0.001) & 0.546 (p < 0.001) & 0.483 (p < 0.001) & 0.338 (p < 0.001)\\
\hspace{1em}1f & Gumbel, θ = (3, 3, 1.5) & 0.860 (p < 0.001) & 0.681 (p < 0.001) & 0.473 (p < 0.001) & 0.331 (p < 0.001)\\
\hspace{1em}1g & Clayton, θ = (1, 1, 1) & 0.618 (p < 0.001) & 0.446 (p < 0.001) & 0.494 (p < 0.001) & 0.346 (p < 0.001)\\
\hspace{1em}1h & Clayton, θ = (3, 3, 1) & 0.814 (p < 0.001) & 0.630 (p < 0.001) & 0.476 (p < 0.001) & 0.332 (p < 0.001)\\
\hspace{1em}1i & Clayton, θ = (5, 5, 1) & 0.891 (p < 0.001) & 0.725 (p < 0.001) & 0.468 (p < 0.001) & 0.325 (p < 0.001)\\
\addlinespace[0.3em]
\multicolumn{6}{l}{\textbf{Scenario 2: (+, +, ind)}}\\
\hspace{1em}2a & Frank, θ = (1, 1, ind) & 0.031 (p = 0.031) & 0.020 (p = 0.031) & 0.003 (p = 0.850) & 0.002 (p = 0.846)\\
\hspace{1em}2b & Frank, θ = (3, 3, ind) & 0.201 (p < 0.001) & 0.134 (p < 0.001) & -0.004 (p = 0.753) & -0.003 (p = 0.752)\\
\hspace{1em}2c & Frank, θ = (5, 5, ind) & 0.400 (p < 0.001) & 0.269 (p < 0.001) & -0.028 (p = 0.049) & -0.019 (p = 0.047)\\
\hspace{1em}2d & Gumbel, θ = (1.5, 1.5, ind) & 0.270 (p < 0.001) & 0.183 (p < 0.001) & 0.025 (p = 0.081) & 0.017 (p = 0.079)\\
\hspace{1em}2e & Gumbel, θ = (2, 2, ind) & 0.467 (p < 0.001) & 0.325 (p < 0.001) & -0.014 (p = 0.308) & -0.010 (p = 0.314)\\
\hspace{1em}2f & Gumbel, θ = (3, 3, ind) & 0.734 (p < 0.001) & 0.545 (p < 0.001) & 0.019 (p = 0.176) & 0.013 (p = 0.178)\\
\hspace{1em}2g & Clayton, θ = (1, 1, ind) & 0.276 (p < 0.001) & 0.187 (p < 0.001) & 0.016 (p = 0.259) & 0.011 (p = 0.260)\\
\hspace{1em}2h & Clayton, θ = (3, 3, ind) & 0.645 (p < 0.001) & 0.468 (p < 0.001) & 0.008 (p = 0.581) & 0.005 (p = 0.580)\\
\hspace{1em}2i & Clayton, θ = (5, 5, ind) & 0.795 (p < 0.001) & 0.608 (p < 0.001) & -0.005 (p = 0.749) & -0.003 (p = 0.746)\\
\addlinespace[0.3em]
\multicolumn{6}{l}{\textbf{Scenario 3: (-, -, ind)}}\\
\hspace{1em}3a & Frank, θ = (-1, -1, ind) & 0.048 (p < 0.001) & 0.032 (p < 0.001) & 0.019 (p = 0.181) & 0.013 (p = 0.182)\\
\hspace{1em}3b & Frank, θ = (-3, -3, ind) & 0.182 (p < 0.001) & 0.122 (p < 0.001) & -0.010 (p = 0.474) & -0.007 (p = 0.477)\\
\hspace{1em}3c & Frank, θ = (-5, -5, ind) & 0.390 (p < 0.001) & 0.264 (p < 0.001) & -0.028 (p = 0.050) & -0.019 (p = 0.048)\\
\hspace{1em}3d & Gumbel (R), θ = (1.5, 1.5, ind) & 0.234 (p < 0.001) & 0.158 (p < 0.001) & -0.004 (p = 0.784) & -0.003 (p = 0.777)\\
\hspace{1em}3e & Gumbel (R), θ = (2, 2, ind) & 0.485 (p < 0.001) & 0.339 (p < 0.001) & -0.001 (p = 0.954) & -0.000 (p = 0.964)\\
\hspace{1em}3f & Gumbel (R), θ = (3, 3, ind) & 0.724 (p < 0.001) & 0.534 (p < 0.001) & -0.001 (p = 0.923) & -0.001 (p = 0.926)\\
\hspace{1em}3g & Clayton (R), θ = (1, 1, ind) & 0.259 (p < 0.001) & 0.175 (p < 0.001) & 0.005 (p = 0.701) & 0.004 (p = 0.689)\\
\hspace{1em}3h & Clayton (R), θ = (3, 3, ind) & 0.652 (p < 0.001) & 0.474 (p < 0.001) & 0.011 (p = 0.422) & 0.007 (p = 0.432)\\
\hspace{1em}3i & Clayton (R), θ = (5, 5, ind) & 0.796 (p < 0.001) & 0.609 (p < 0.001) & 0.004 (p = 0.760) & 0.003 (p = 0.761)\\
\addlinespace[0.3em]
\multicolumn{6}{l}{\textbf{Scenario 4: (+, -, ind)}}\\
\hspace{1em}4a & Frank, θ = (1, -1, ind) & -0.017 (p = 0.243) & -0.011 (p = 0.241) & 0.011 (p = 0.432) & 0.007 (p = 0.437)\\
\hspace{1em}4b & Frank, θ = (3, -3, ind) & -0.183 (p < 0.001) & -0.123 (p < 0.001) & 0.018 (p = 0.194) & 0.012 (p = 0.198)\\
\hspace{1em}4c & Frank, θ = (5, -5, ind) & -0.424 (p < 0.001) & -0.287 (p < 0.001) & -0.006 (p = 0.672) & -0.004 (p = 0.681)\\
\hspace{1em}4d & Gaussian, θ = (0.3, -0.3, ind) & -0.119 (p < 0.001) & -0.080 (p < 0.001) & -0.036 (p = 0.011) & -0.024 (p = 0.010)\\
\hspace{1em}4e & Gaussian, θ = (0.5, -0.5, ind) & -0.250 (p < 0.001) & -0.168 (p < 0.001) & 0.002 (p = 0.879) & 0.001 (p = 0.877)\\
\hspace{1em}4f & Gaussian, θ = (0.7, -0.7, ind) & -0.477 (p < 0.001) & -0.330 (p < 0.001) & -0.016 (p = 0.260) & -0.011 (p = 0.261)\\
\addlinespace[0.3em]
\multicolumn{6}{l}{\textbf{Scenario 5: (+, +, -)}}\\
\hspace{1em}5a & Frank, θ = (1, 1, -1) & -0.156 (p < 0.001) & -0.105 (p < 0.001) & -0.190 (p < 0.001) & -0.128 (p < 0.001)\\
\hspace{1em}5b & Frank, θ = (1, 1, -3) & -0.410 (p < 0.001) & -0.279 (p < 0.001) & -0.452 (p < 0.001) & -0.309 (p < 0.001)\\
\hspace{1em}5c & Frank, θ = (1, 1, -5) & -0.598 (p < 0.001) & -0.419 (p < 0.001) & -0.648 (p < 0.001) & -0.460 (p < 0.001)\\
\hspace{1em}5d & Gaussian, θ = (0.2, 0.2, -0.2) & -0.147 (p < 0.001) & -0.098 (p < 0.001) & -0.190 (p < 0.001) & -0.128 (p < 0.001)\\
\hspace{1em}5e & Gaussian, θ = (0.2, 0.2, -0.5) & -0.406 (p < 0.001) & -0.278 (p < 0.001) & -0.467 (p < 0.001) & -0.322 (p < 0.001)\\
\hspace{1em}5f & Gaussian, θ = (0.2, 0.2, -0.7) & -0.638 (p < 0.001) & -0.456 (p < 0.001) & -0.699 (p < 0.001) & -0.508 (p < 0.001)\\
\hspace{1em}5g & Frank, θ = (5, 5, -1) & 0.320 (p < 0.001) & 0.214 (p < 0.001) & -0.170 (p < 0.001) & -0.114 (p < 0.001)\\
\hspace{1em}5h & Frank, θ = (5, 5, -0.5) & 0.367 (p < 0.001) & 0.247 (p < 0.001) & -0.095 (p < 0.001) & -0.064 (p < 0.001)\\
\hspace{1em}5i & Gaussian, θ = (0.7, 0.7, -0.2) & 0.386 (p < 0.001) & 0.263 (p < 0.001) & -0.202 (p < 0.001) & -0.136 (p < 0.001)\\
\hspace{1em}5j & Gaussian, θ = (0.7, 0.7, -0.1) & 0.425 (p < 0.001) & 0.292 (p < 0.001) & -0.090 (p < 0.001) & -0.060 (p < 0.001)\\
\addlinespace[0.3em]
\multicolumn{6}{l}{\textbf{Scenario 6: (-, -, +)}}\\
\hspace{1em}6a & Frank, θ = (-1, -1, 1) & 0.186 (p < 0.001) & 0.124 (p < 0.001) & 0.166 (p < 0.001) & 0.110 (p < 0.001)\\
\hspace{1em}6b & Frank, θ = (-3, -3, 1) & 0.332 (p < 0.001) & 0.226 (p < 0.001) & 0.184 (p < 0.001) & 0.123 (p < 0.001)\\
\hspace{1em}6c & Frank, θ = (-5, -5, 1) & 0.487 (p < 0.001) & 0.336 (p < 0.001) & 0.142 (p < 0.001) & 0.095 (p < 0.001)\\
\hspace{1em}6d & Gumbel (R), θ = (1.5, 1.5, 1.5) & 0.614 (p < 0.001) & 0.441 (p < 0.001) & 0.485 (p < 0.001) & 0.340 (p < 0.001)\\
\hspace{1em}6e & Gumbel (R), θ = (2, 2, 1.5) & 0.728 (p < 0.001) & 0.541 (p < 0.001) & 0.465 (p < 0.001) & 0.325 (p < 0.001)\\
\hspace{1em}6f & Gumbel (R), θ = (3, 3, 1.5) & 0.872 (p < 0.001) & 0.692 (p < 0.001) & 0.484 (p < 0.001) & 0.340 (p < 0.001)\\
\hspace{1em}6g & Clayton (R), θ = (1, 1, 1) & 0.615 (p < 0.001) & 0.445 (p < 0.001) & 0.479 (p < 0.001) & 0.335 (p < 0.001)\\
\hspace{1em}6h & Clayton (R), θ = (3, 3, 1) & 0.813 (p < 0.001) & 0.630 (p < 0.001) & 0.486 (p < 0.001) & 0.339 (p < 0.001)\\
\hspace{1em}6i & Clayton (R), θ = (5, 5, 1) & 0.896 (p < 0.001) & 0.730 (p < 0.001) & 0.473 (p < 0.001) & 0.329 (p < 0.001)\\
\addlinespace[0.3em]
\multicolumn{6}{l}{\textbf{Scenario 7: (-, -, -)}}\\
\hspace{1em}7a & Frank, θ = (-1, -1, -1) & -0.136 (p < 0.001) & -0.091 (p < 0.001) & -0.165 (p < 0.001) & -0.110 (p < 0.001)\\
\hspace{1em}7b & Frank, θ = (-1, -1, -3) & -0.417 (p < 0.001) & -0.285 (p < 0.001) & -0.459 (p < 0.001) & -0.316 (p < 0.001)\\
\hspace{1em}7c & Frank, θ = (-1, -1, -5) & -0.586 (p < 0.001) & -0.410 (p < 0.001) & -0.636 (p < 0.001) & -0.451 (p < 0.001)\\
\hspace{1em}7d & Gumbel (R), θ = (1.5, 1.5, 1.5) & -0.116 (p < 0.001) & -0.080 (p < 0.001) & -0.479 (p < 0.001) & -0.335 (p < 0.001)\\
\hspace{1em}7e & Gumbel (R), θ = (1.5, 1.5, 2) & -0.225 (p < 0.001) & -0.157 (p < 0.001) & -0.670 (p < 0.001) & -0.489 (p < 0.001)\\
\hspace{1em}7f & Gumbel (R), θ = (1.5, 1.5, 3) & -0.380 (p < 0.001) & -0.269 (p < 0.001) & -0.855 (p < 0.001) & -0.674 (p < 0.001)\\
\hspace{1em}7g & Clayton (R), θ = (1, 1, 1) & -0.050 (p < 0.001) & -0.038 (p < 0.001) & -0.467 (p < 0.001) & -0.325 (p < 0.001)\\
\hspace{1em}7h & Clayton (R), θ = (1, 1, 3) & -0.293 (p < 0.001) & -0.215 (p < 0.001) & -0.790 (p < 0.001) & -0.603 (p < 0.001)\\
\hspace{1em}7i & Clayton (R), θ = (1, 1, 5) & -0.343 (p < 0.001) & -0.257 (p < 0.001) & -0.883 (p < 0.001) & -0.712 (p < 0.001)\\
\addlinespace[0.3em]
\multicolumn{6}{l}{\textbf{Scenario 8: (+, -, +)}}\\
\hspace{1em}8a & Frank, θ = (1, -1, 1) & 0.119 (p < 0.001) & 0.079 (p < 0.001) & 0.147 (p < 0.001) & 0.098 (p < 0.001)\\
\hspace{1em}8b & Frank, θ = (3, -3, 1) & -0.096 (p < 0.001) & -0.063 (p < 0.001) & 0.144 (p < 0.001) & 0.096 (p < 0.001)\\
\hspace{1em}8c & Frank, θ = (5, -5, 1) & -0.322 (p < 0.001) & -0.214 (p < 0.001) & 0.159 (p < 0.001) & 0.106 (p < 0.001)\\
\hspace{1em}8d & Gaussian, θ = (0.2, -0.2, 0.2) & 0.134 (p < 0.001) & 0.090 (p < 0.001) & 0.183 (p < 0.001) & 0.123 (p < 0.001)\\
\hspace{1em}8e & Gaussian, θ = (0.5, -0.5, 0.2) & -0.091 (p < 0.001) & -0.061 (p < 0.001) & 0.192 (p < 0.001) & 0.129 (p < 0.001)\\
\hspace{1em}8f & Gaussian, θ = (0.7, -0.7, 0.2) & -0.383 (p < 0.001) & -0.261 (p < 0.001) & 0.186 (p < 0.001) & 0.124 (p < 0.001)\\
\addlinespace[0.3em]
\multicolumn{6}{l}{\textbf{Scenario 9: (+, -, -)}}\\
\hspace{1em}9a & Frank, θ = (1, -1, -1) & -0.191 (p < 0.001) & -0.128 (p < 0.001) & -0.171 (p < 0.001) & -0.114 (p < 0.001)\\
\hspace{1em}9b & Frank, θ = (3, -3, -1) & -0.360 (p < 0.001) & -0.243 (p < 0.001) & -0.187 (p < 0.001) & -0.124 (p < 0.001)\\
\hspace{1em}9c & Frank, θ = (5, -5, -1) & -0.497 (p < 0.001) & -0.342 (p < 0.001) & -0.156 (p < 0.001) & -0.104 (p < 0.001)\\
\hspace{1em}9d & Gaussian, θ = (0.2, -0.2, -0.2) & -0.224 (p < 0.001) & -0.151 (p < 0.001) & -0.197 (p < 0.001) & -0.133 (p < 0.001)\\
\hspace{1em}9e & Gaussian, θ = (0.5, -0.5, -0.2) & -0.387 (p < 0.001) & -0.264 (p < 0.001) & -0.186 (p < 0.001) & -0.124 (p < 0.001)\\
\hspace{1em}9f & Gaussian, θ = (0.7, -0.7, -0.2) & -0.574 (p < 0.001) & -0.404 (p < 0.001) & -0.197 (p < 0.001) & -0.132 (p < 0.001)\\
\addlinespace[0.3em]
\multicolumn{6}{l}{\textbf{Scenario 10: (ind, ind, $\cdot$)}}\\
\hspace{1em}10a & Frank, θ = (ind, ind, -1) & -0.179 (p < 0.001) & -0.120 (p < 0.001) & -0.179 (p < 0.001) & -0.120 (p < 0.001)\\
\hspace{1em}10b & Frank, θ = (ind, ind, 1) & 0.165 (p < 0.001) & 0.110 (p < 0.001) & 0.165 (p < 0.001) & 0.110 (p < 0.001)\\
\hspace{1em}10c & Indep, θ = (ind, ind, ind) & -0.019 (p = 0.169) & -0.013 (p = 0.169) & -0.019 (p = 0.169) & -0.013 (p = 0.169)\\
\addlinespace[0.3em]
\multicolumn{6}{l}{\textbf{Scenario 11: Simplifying assumption violated}}\\
\hspace{1em}11a & Frank, θ(z) = exp(z) & 0.272 (p < 0.001) & 0.183 (p < 0.001) & 0.259 (p < 0.001) & 0.174 (p < 0.001)\\
\hspace{1em}11b & Frank, θ(z) = -exp(z) & -0.253 (p < 0.001) & -0.170 (p < 0.001) & -0.289 (p < 0.001) & -0.195 (p < 0.001)\\
\hspace{1em}11c & Gaussian, θ(z) = 1-2z & 0.006 (p = 0.684) & 0.003 (p = 0.754) & -0.020 (p = 0.161) & -0.014 (p = 0.138)\\*
\end{longtable}
\endgroup{}

\end{document}